\documentclass[lettersize,journal]{IEEEtran}
\usepackage[utf8]{inputenc}
\usepackage{textcomp}
\usepackage{graphicx}
\usepackage{amsmath,amsfonts}

\usepackage{amsthm}
\usepackage{siunitx}
\usepackage{booktabs}
\usepackage{array}
\usepackage{stfloats}
\usepackage{url}
\usepackage{cite}
\usepackage[hidelinks,breaklinks]{hyperref}

\usepackage{todonotes}
\usepackage[dvipsnames]{xcolor}

\theoremstyle{plain}
\newtheorem{theorem}{Theorem}

\newtheorem{lemma}[theorem]{Lemma}
\theoremstyle{definition}
\newtheorem{assumption}{Assumption}
\newtheorem{remark}{Remark}
\newtheorem{problem}{Problem}

\newcommand{\R}{\mathbb{R}}
\newcommand{\N}{\mathbb{N}}
\newcommand{\Acl}{\mathbf{A}_{\mathrm{cl}}}
\newcommand{\Wset}{\mathcal{W}}
\newcommand{\Xset}{\mathcal{X}}
\newcommand{\Uset}{\mathcal{U}}
\newcommand{\Cset}{\mathcal{C}}
\newcommand{\Oinf}{\mathcal{O}_\infty}
\newcommand{\Rset}{\mathcal{R}}
\newcommand{\Pset}{\mathcal{P}}
\newcommand{\oplusm}{\oplus}

\begin{document}

\title{Certifiable Explicit Model Predictive Control for Spacecraft Rendezvous under Bounded Disturbances}

\author{Diogo Silva and Daniel Silvestre%
\thanks{This work was supported by FCT -- Funda\c{c}\~{a}o para a Ci\^{e}ncia e a Tecnologia, I.P., under project references UID/PRR/00066/2025 with DOI identifier \url{https://doi.org/10.54499/UID/PRR/00066/2025} and UID/00066/2025 with DOI identifier \url{https://doi.org/10.54499/UID/00066/2025}.}%
\thanks{D. Silva is with the School of Science and Technology from the NOVA University of Lisbon, 2829-516 Caparica, Portugal, with the Center for Technology and Systems (CTS) part of LASI, and also with the Institute for Systems and Robotics, Instituto Superior T\'{e}cnico, University of Lisbon. Email: \href{mailto:dgn.silva@campus.fct.unl.pt}{dgn.silva@campus.fct.unl.pt} \emph{(Corresponding author)}}%
\thanks{D. Silvestre is with the School of Science and Technology from the NOVA University of Lisbon, 2829-516 Caparica, Portugal, with the Center for Technology and Systems (CTS) part of LASI, and also with the Institute for Systems and Robotics, Instituto Superior T\'{e}cnico, University
of Lisbon. Email: \href{mailto:dsilvestre@fct.unl.pt}{dsilvestre@fct.unl.pt}}}

\markboth{IEEE Transactions on Control Systems Technology}%
{Silva \MakeLowercase{\textit{et al.}}: Certifiable Explicit MPC for Spacecraft Rendezvous}

\maketitle

\begin{abstract}
	Two properties of optimization-based controllers such as model predictive control (MPC) limit their use in space flight. The online computing time varies and can exceed the sampling period, and the closed-loop behavior carries no formal guarantee. Explicit MPC, where a parametric solution of the optimization problem is computed, replaces the on-board optimization with a lookup table of piecewise-affine functions with a fixed execution time. However, it has been regarded as only applicable to small horizons, since the number of regions bounds the memory the table occupies and admits only an exponential bound in the horizon. This paper builds on recent developments in parametric solutions of quadratic programs and nonconvex reachability analysis to provide an entire pipeline for spacecraft rendezvous controllers under the Clohessy-Wiltshire dynamics. For this application, efficient data structures can be constructed to encode the control law, both from a computing time and a memory standpoint. In addition, the controller can be verified offline with a guaranteed closed-loop stability certificate by resorting to reachability analysis with hybrid zonotopes. At a 51-step horizon the rendezvous law occupies \SI{10.5}{\mega\byte} and is evaluated in under \SI{300}{\nano\second}. The occupied memory would fit for instance the CubeSat flight computer ARM9-class AT91RM9200 meaning that the horizon limitation does not hold when the partition is built with the parametric solvers and data structures shown in this paper.
\end{abstract}

\begin{IEEEkeywords}
	Explicit model predictive control, spacecraft rendezvous, reachability analysis, hybrid zonotopes, robust invariant sets, constrained control.
\end{IEEEkeywords}

% =====================================================================
\section{Introduction}
	\label{sec:intro}

	\IEEEPARstart{A}{utonomous} rendezvous is the capability required for on-orbit servicing, active debris removal and in-space assembly, with a focus from the community on optimization-based controllers for their ability to encode constraints and result in optimized control actions \cite{SILVESTRE2026404}. The main objective is to enable a spacecraft to run a guidance law on its own processor without a ground loop inside the decision \cite{quartullo2026robust}. Two requirements are placed on such a law and they pull in different directions. It must generate constrained, propellant-efficient motion in real time on a space-qualified processor, and its closed-loop behavior must be established before launch over the entire navigation dispersion at handover and not over a collection of simulated cases. Therefore, one of the requirements aims for simple solutions for validation whereas the other requires complex algorithms to achieve optimized actions.

	Generating the guidance trajectory on board is attractive but comes with the price of added complexity. Direct and indirect optimal-control methods produce high-fidelity trajectories yet offer either no guaranteed convergence or a computational demand that a flight processor cannot accommodate, which has historically prevented them from being used on space missions \cite{bernardini2024state}. With recent advancements in convex optimization, this application has been revisited since a convex program admits a solution in polynomial time with a convergence bound known before the solver runs. Part of the nonconvexity in powered flight turns out to be exactly removable, because the thrust lower bound and pointing constraints admit a relaxation whose solution provably solves the original problem \cite{blackmore2010minimum}. A customized interior-point implementation of this relaxation was flight-tested on a terrestrial rocket in 2013 \cite{dueri2017customized}.

	The aforementioned strategy is not a general option as nonlinear dynamics, attitude-translation coupling, keep-out zones and plume impingement have no known lossless convexifications, which led the field to consider successive convexification, which replaces the problem by a sequence of convex subproblems iterated until convergence or until a computing time budget is exhausted \cite{mao2016successive}. These approaches have been surveyed for space vehicle control identifying lossless convexification, sequential convex programming and model predictive control (MPC) as key for future missions \cite{malyuta2021advances,wang2024survey}, and the rendezvous literature in particular has taken up sequential convex programming for six-degree-of-freedom close proximity \cite{zhou2019receding} and for docking under state-triggered constraints \cite{zhang2022trajectory}. For some nonconvexities like deadband, there are alternatives to reduce the computational time by exploiting the structure of the problem with approximate solutions to the optimal problem being found close to real time \cite{10542354}.

	Lossless convexification bounds the solve before it runs, whereas a successive scheme guarantees only that the sequence of subproblems converges. As such, the on-board cost is an iteration count capped in practice rather than bounded a priori, and the flight software must carry a solver together with its numerical failure modes. MPC in its implicit form inherits the same solver and the same failure modes. It is by now standard for proximity operations \cite{dicairano2012mpc,weiss2015mpc,hartley2012mpc} and includes formulations that recover the classical V-bar and R-bar hopping strategies of operational rendezvous inside the optimization itself \cite{silvaeurognc}.

	Explicit MPC instead solves the problem offline once as a parametric optimization over the whole parameter space, so that what runs online is a piecewise-affine lookup table with a fixed and data-independent execution time \cite{bemporad2002explicit,alessio2009survey,bemporad2021empc}. The work in \cite{dicairano2012mpc} applies such a law to the short-range phase, and \cite{leomanni2014explicit} to low-thrust proximity operations using a Laguerre parametrization. The objection raised against it is the offline size of that lookup, since the number of regions bounds the memory the table occupies and admits only an exponential bound in the horizon \cite{bemporad2021empc,alessio2009survey}. That objection has driven the rendezvous literature toward reducing the decision-variable count by other means, through parametric shrinking-horizon schemes \cite{castroviejo2024parametric,farooqi2020shrinking} or robust variable-horizon formulations for rotating targets \cite{quartullo2026robust}, all of which retain an online solver. These parametrizations have been compared on the region count alone, and not on the quantities that decide a flight implementation. As such, the first problem addressed in this paper is that comparison for the explicit rendezvous problem, on the memory of the evaluated law, the offline construction time, the closed-loop cost at matched propellant, and constraint compliance against a nonlinear plant.

	Reducing the number of free inputs shrinks the explicit partition, a mechanism identified in \cite{tondel2002complexity} and surveyed in \cite{rossiter2023review,valencia2011alternative}. The Laguerre construction is due to \cite{wang2004laguerre}, where discrete Laguerre functions form an orthonormal basis with an exponentially decaying envelope set by a single pole parameter, so a short sequence of them can represent an input profile that would otherwise need many free samples. Its appeal is argued in terms of the feasibility-performance compromise \cite{wang2008exploiting}, since at a fixed number of degrees of freedom a Laguerre basis spreads the input over a longer effective profile. Thereby, it admits a larger controlled admissible set than the conventional truncation, at the price of worse conditioning, which \cite{rossiter2023review} identifies as the reason the same basis inflates the number of critical regions. Move-blocking holds the input constant over blocks of samples and loses the a~priori stability guarantee \cite{makarow2024moveblock}. Complementary routes act on the partition after the fact, by clipping, merging regions, or restricting the parameter set to what the closed loop can reach \cite{kvasnica2019complexity}, and recent work replaces the polyhedral partition altogether with constrained-zonotope representations \cite{mihai2025empczono}, while efficient multi-parametric solvers \cite{arnstrom2024daqp} have considerably reduced the offline cost of building any of them.

	On the verification side, embedding the Karush-Kuhn-Tucker (KKT) conditions of the multi-parametric program in a hybrid zonotope allows computing exactly the closed-loop reachable set of a linear MPC, with memory growing linearly in the horizon \cite{bird2022hz}. The validation procedure proposed here is based on verifying that the evolution of the forward propagated set will become equivalent to the forward propagation using the unconstrained linear quadratic regulator (LQR) on a neighborhood of the origin \cite{scokaert1998clqr}. This is related to the maximal-admissible-set recursion \cite{gilbert1991maximal}, in the form that tightens each step of the recursion by the worst case over a bounded disturbance set \cite{kolmanovsky1998disturbance}. Regions of this kind act as terminal constraints in the classical stability arguments for MPC \cite{mayne2000constrained}, although the guarantee they carry is asymptotic, whereas the certificate established here holds after a finite number of time steps.

	The verification method presented here avoids validating closed-loop behavior of an optimization-based guidance law through the conventional approach of performing a Monte Carlo simulation \cite{hartley2012mpc,gavilan2012chance}. Sampling explores the reachable set from the \emph{inside}, so a campaign that finds no violation establishes no certificate of stability. Set-propagation methods that over-approximate fail in the opposite way, accumulating conservatism at every step until containment becomes unprovable for a reason that has nothing to do with the controller. The hybrid zonotope encoding of the multi-parametric program \cite{bird2022hz} avoids both, since the propagated set is exact.

	\subsection{Contributions and Organization}
    In this paper, the main contributions are the following:
		\begin{enumerate}
			\item A pipeline that answers the memory objection to explicit MPC for spacecraft rendezvous. Among the input parametrizations that reduce explicit complexity, namely the control horizon, Laguerre bases and move-blocking, a short control horizon lowers the closed-loop cost by up to \SI{22}{\percent} against the standard design and builds three times faster. The resulting lookup is \SI{1.85}{\mega\byte} at $N=16$ and \SI{10.5}{\mega\byte} at $N=51$, within the memory of a CubeSat flight computer \cite{araujo2025cots}, at under \SI{300}{\nano\second} per evaluation.
			\item A closed-form expression, valid for any linear input parametrization, for a region on which the applied input of the MPC law coincides with a single constant gain. This expression can be used for verification and validation with a negligible compute time since it is read from the condensed explicit solution of the quadratic program.
			\item A method to formally verify the closed-loop behavior, asserting constraint satisfaction and stability from hybrid zonotope propagation and a stopping criterion given by the maximal robust invariant subset of the constant-gain region.
		\end{enumerate}

		The rest of the paper is organized as follows. Section~\ref{sec:problem} states the relative dynamics, the constraints and the disturbance model of the rendezvous task, and separates the two problems the paper addresses, namely running the controller on board and verifying it before launch. Section~\ref{sec:empc} states the parametrized MPC law and its explicit solution, on which the first problem is posed. The second problem occupies the three sections that follow, with Section~\ref{sec:cgr} deriving the constant-gain region in closed form, Section~\ref{sec:cert} building the robust invariant set inside it and proving the finite-step reduction, and Section~\ref{sec:hz} giving the reachability computation and the containment test that decides it. Section~\ref{sec:results} shows the simulation results for a low-Earth-orbit rendezvous.

	\emph{Notation:} Lowercase bold denotes vectors ($\mathbf{x}$), uppercase bold denotes matrices ($\mathbf{A}$), and calligraphic letters denote sets ($\Cset_T$), while scalars and indices are plain ($N$, $k$, $T_s$). A bold $\mathbf{0}$ or $\mathbf{I}$ denotes the zero or identity matrix of the size implied by context. $\mathrm{diag}(\cdot)$ builds a block-diagonal matrix from its arguments, each of which may be a scalar or a matrix. A superscript star marks an optimal value, so $\boldsymbol{\eta}^\star$ is the minimizer of the optimization problem at hand and $\mathbf{u}^\star$ the input it produces, while the superscript $\mathrm{unc}$ marks the corresponding unconstrained value. The symbol $\oplusm$ denotes the Minkowski sum. An absolute value or an inequality between vectors is applied componentwise.

% =====================================================================
\section{Problem Formulation}
	\label{sec:problem}

	The two requirements that opened Section~\ref{sec:intro} are made precise here, first as a relative-motion model with its constraints and disturbance set, then as the two problems the rest of the paper tackles.

	\subsection{Relative dynamics and disturbance model}
		\label{sec:relmodel}
		Consider a chaser spacecraft performing a rendezvous with a passive target on a circular orbit of semi-major axis $a$ and mean motion $n = \sqrt{\mu/a^3}$. Relative motion is expressed in the target's local-vertical/local-horizontal (LVLH) frame and modeled by the Clohessy-Wiltshire (CW) equations \cite{clohessy1960terminal}, which give the discrete-time linear plant
		\begin{equation}
			\label{eq:plant}
			\mathbf{x}_{k+1} = \mathbf{A} \mathbf{x}_k + \mathbf{B} \mathbf{u}_k + \mathbf{w}_k , \qquad \mathbf{w}_k \in \Wset ,
		\end{equation}
		with state $\mathbf{x} \in \R^{n_x}$, input $\mathbf{u} \in \R^{n_u}$ and an additive disturbance confined to a set $\Wset$, subject to polytopic state and input constraints
		\begin{equation}
			\label{eq:constraints}
			\mathbf{x}_k \in \Xset , \qquad \mathbf{u}_k \in \Uset ,
		\end{equation}
		together with a terminal set $\Xset_N \subseteq \Xset$ and a dispersion $\Xset_0$ of admissible initial conditions. 

		Conventions for this frame differ between authors, so the one used here is made explicit. Writing $\mathbf{r}$ and $\mathbf{v}$ for the target's inertial position and velocity and $\mathbf{h}_{\mathrm{orb}}$ $= \mathbf{r} \times \mathbf{v}$ for its angular momentum, the frame axes are
		\begin{equation}
			\label{eq:lvlh}
			\mathbf{e}_z = -\frac{\mathbf{r}}{\lVert \mathbf{r} \rVert} , \qquad
			\mathbf{e}_y = -\frac{\mathbf{h}_{\mathrm{orb}}}{\lVert \mathbf{h}_{\mathrm{orb}} \rVert} , \qquad
			\mathbf{e}_x = \mathbf{e}_y \times \mathbf{e}_z ,
		\end{equation}
		so that $z$ points along the nadir, $y$ is the negative orbit normal, and $x$ completes the right-handed triad and coincides with the velocity direction for a circular orbit. The state is ordered as along-track, cross-track and nadir position followed by the corresponding velocities.

		With impulsive actuation and a sampling period $T_s$, this instantiates \eqref{eq:plant} with $\mathbf{x} = [x,y,z,v_x,v_y,v_z]^\top \in \R^{n_x}$ collecting relative position and velocity, $\mathbf{u} \in \R^{n_u}$ the commanded velocity increment, and $\mathbf{A} = \mathbf{A}_{\mathrm{CW}}$, $\mathbf{B} = \mathbf{B}_{\mathrm{CW}}$ the exact discretization of the CW state-transition matrix over $T_s$. Writing $\sigma := n T_s$, $s := \sin\sigma$, $c := \cos\sigma$,
		\begin{equation}
			\label{eq:cwmat}
			\mathbf{A}_{\mathrm{CW}} = \begin{bmatrix}
				1 & 0 & 6(\sigma-s) & \tfrac{4s}{n}-3T_s & 0 & \tfrac{2(1-c)}{n} \\
				0 & c & 0 & 0 & \tfrac{s}{n} & 0 \\
				0 & 0 & 4-3c & -\tfrac{2(1-c)}{n} & 0 & \tfrac{s}{n} \\
				0 & 0 & 6n(1-c) & 4c-3 & 0 & 2s \\
				0 & -ns & 0 & 0 & c & 0 \\
				0 & 0 & 3ns & -2s & 0 & c
			\end{bmatrix} ,
		\end{equation}
		rows and columns ordered as $\mathbf{x}$, where the zero pattern is exactly the in-plane/out-of-plane decoupling (cross-track motion $(y,v_y)$ never mixes with the rest). Because actuation is assumed to be an impulsive velocity increment, $\mathbf{B}_{\mathrm{CW}}$ is the three columns of $\mathbf{A}_{\mathrm{CW}}$ acting on $(v_x,v_y,v_z)$. The CW subscript is dropped from here onward, since no other plant model is used.

		\eqref{eq:cwmat} is a first-order expansion about a circular reference orbit, so over one sampling period the state it predicts differs from the true motion. That difference collects the linearization error, which grows with the separation from the target, the eccentricity of the target orbit, whose correct linearization is time-varying, and every perturbation the model omits, of which differential $J_2$ is the largest at this altitude. Together with any actuation execution error, these are collected in $\mathbf{w}_k$, on which Theorem~\ref{thm:cert} requires only a bound over a single sampling period.

		\begin{remark}[Scope of the linear model]
			\label{rem:scope}
			The difference grows with the separation, from \SI{0.31}{\meter} over one sample at the \SI{2}{\kilo\meter} start of the transfer to \SI{115}{\meter} at \SI{100}{\kilo\meter} in Fig.~\ref{fig:cwbudget}, and a shorter sampling period does not recover it. The regime treated here is therefore the close-range phase with a near-circular target, whereas a more distant or eccentric rendezvous calls for a different model of relative motion \cite{sullivan2017survey}. Lemma~\ref{thm:cgr} and Theorem~\ref{thm:cert} are stated for the linear time-invariant plant of \eqref{eq:plant}, so any other linear time-invariant model may be substituted for \eqref{eq:cwmat} without changing them.
		\end{remark}

		\begin{assumption}
			\label{as:W} $\Wset \subset \R^{n_x}$ is known, compact and convex, with $\mathbf{0} \in \Wset$.
		\end{assumption}

		Since the guarantees sought here are deterministic, no distribution needs to be assumed for the disturbances, with the nominal case being recovered by $\Wset = \{\mathbf{0}\}$.

	\subsection{Constraints and the rendezvous task}
		The chaser must remain inside an approach corridor and respect actuator limits, which fixes the sets of \eqref{eq:constraints} as the boxes
		\begin{equation}
			\label{eq:constraintsCW}
			\Xset := \{\mathbf{x} : \underline{\mathbf{x}} \leq \mathbf{x} \leq \overline{\mathbf{x}}\}, \qquad
			\Uset := \{\mathbf{u} : \lvert \mathbf{u} \rvert \leq \overline{\mathbf{u}}\} ,
		\end{equation}
		and must be driven from a dispersed initial condition $\mathbf{x}_0 \in \Xset_0$ to a terminal box $\Xset_N \subset \Xset$ around the docking point. The set $\Xset_0$ is the navigation dispersion at handover, over which stability must be guaranteed for \emph{every} initial condition and \emph{every} disturbance sequence in $\Wset$, and not only for the nominal trajectory.

		The disturbance acts after the input, so a state admissible to the program at one step is displaced by at most one $\mathbf{w} \in \Wset$ before the next decision is taken. The corridor has to absorb that displacement, since a state that only just satisfies the box is carried across it by the disturbance alone.

		\begin{assumption}
			\label{as:margin} The program of \eqref{eq:ocp} is posed on the shrunk corridor $\Xset \ominus \Wset$, which is non-empty.
		\end{assumption}

		Both the plant and the sets above are written in coordinates centered on the docking point, $\mathbf{x} \mapsto \mathbf{x} - \mathbf{x}_{\mathrm{tgt}}$ with $\mathbf{x}_{\mathrm{tgt}}$ the target state of Table~\ref{tab:scenario}. A pure along-track offset at zero relative velocity is an equilibrium of \eqref{eq:cwmat}, so the translation leaves $\mathbf{A}$ and $\mathbf{B}$ unchanged and moves only the constraint boxes of \eqref{eq:constraintsCW} and the dispersion $\Xset_0$. The regulation objective is then stated about the origin of the shifted frame, which places $\mathbf{0}$ in the interior of $\Cset_T$ as property~\ref{it:rpi-fin} of Theorem~\ref{thm:rpi} requires, and every set constructed in Sections~\ref{sec:cgr} to~\ref{sec:hz} is expressed in that frame.

	\subsection{Onboard Controller and Offline Verification Problems}
		\label{sec:twoproblems}
		The control problem is to drive the chaser from anywhere in $\Xset_0$ to $\Xset_N$ while respecting \eqref{eq:constraintsCW}, under any disturbance sequence admitted by Assumption~\ref{as:W}. The approach corridor and the per-axis impulse bound are hard limits that hold at every step, and propellant is what the transfer spends, so the law adopted here encodes the constraints and the cost together in its own formulation.

		A stored control law is defined only over the region it was built on, so the transfer must stay inside that region for the stored law to apply at every step.

		\begin{assumption}
			\label{as:theta} The region over which the explicit solution is built covers every state the closed loop can visit, under every admissible disturbance.
		\end{assumption}

		Two things about the resulting law then remain open, namely whether it can be evaluated on a space-qualified processor and whether its behavior over the transfer can be established before launch.

		\begin{problem}[On-board implementation]
			\label{prob:impl}
			Choose the input parametrization $\mathbf{T}$ of \eqref{eq:param} so that the explicit solution of the resulting program is stored and evaluated within the budget of a flight processor,
			\begin{equation}
				\label{eq:prob1}
				\mathrm{mem}(\mathbf{T}) \ \leq \ M_{\mathrm{avail}} , \qquad
				t_{\mathrm{eval}}(\mathbf{T}) \ \leq \ T_s ,
			\end{equation}
			where $\mathrm{mem}$ is the memory the stored law occupies, $M_{\mathrm{avail}}$ the memory the processor carries, and $t_{\mathrm{eval}}$ the time one evaluation takes, which must fit inside the sampling period, while the closed loop still drives every $\mathbf{x}_0 \in \Xset_0$ into $\Xset_N$ under \eqref{eq:constraintsCW}.
		\end{problem}

		\begin{problem}[Offline verification]
			\label{prob:verif}
			For the controller from Problem \ref{prob:impl}, establish that every $\mathbf{x}_0 \in \Xset_0$ and every disturbance sequence $\{\mathbf{w}_k\}_{k \geq 0} \subset \Wset$ produce a trajectory that respects the constraints at every step and converges to a bounded set $\Xset_\infty$ around the docking point, whose size the disturbance fixes and which is $\{\mathbf{0}\}$ when the model is exact,
			\begin{equation}
				\label{eq:probstatement}
				\begin{aligned}
					\mathbf{x}_k &\in \Xset , \quad \mathbf{u}_k \in \Uset , & &\forall k \geq 0 , \\
					\mathbf{x}_k &\to \Xset_\infty , & &k \to \infty .
				\end{aligned}
			\end{equation}
			The guarantee is to be established by computation on the exact closed-loop reachable sets rather than by sampling.
		\end{problem}

		The parametrization enters both problems, since it fixes the on-board footprint of Problem~\ref{prob:impl} and also the size of the reachability problem from which the guarantee of Problem~\ref{prob:verif} is computed.

% =====================================================================
\section{Explicit MPC for Flight Implementation}
	\label{sec:empc}

	Addressing Problem~\ref{prob:impl} requires an explicit MPC with the input sequence restricted to a low-dimensional subspace, a restriction that keeps its explicit solution small enough to store. That restriction is the only design freedom exercised in Problem~\ref{prob:impl}, every other element of the controller being held fixed across the choices compared in Section~\ref{sec:results}.

	\subsection{Parametrized MPC}
		\label{sec:mpc} At each step the controller solves, over the horizon $N$,
		\begin{equation}
			\label{eq:ocp}
			\begin{aligned}
				\min_{\mathbf{U}} \ & \sum_{i=0}^{N-1}\left( \mathbf{x}_i^\top \mathbf{Q} \mathbf{x}_i + \mathbf{u}_i^\top \mathbf{R} \mathbf{u}_i \right) + \mathbf{x}_N^\top \mathbf{Q}_N \mathbf{x}_N \\
				\text{s.t.} \ \ & \eqref{eq:plant}\ \text{with}\ \mathbf{w} \equiv 0,\ \eqref{eq:constraints},\ \mathbf{x}_N \in \Xset_N ,
			\end{aligned}
		\end{equation}
		with $\mathbf{Q} \succeq 0$ and $\mathbf{R} \succ 0$ chosen by the designer. The terminal weight $\mathbf{Q}_N$ is not free, being obtained by solving the discrete algebraic Riccati equation defined by $\mathbf{A}$, $\mathbf{B}$, $\mathbf{Q}$ and $\mathbf{R}$. Stacking the horizon into $\mathbf{X} = [\mathbf{x}_1^\top \cdots \mathbf{x}_N^\top]^\top$ gives $\mathbf{X} = \hat{\mathbf{A}} \mathbf{x}_0 + \hat{\mathbf{B}} \mathbf{U}$ for the lifted prediction matrices
		\begin{equation}
			\begin{gathered}
				\hat{\mathbf{A}} = \begin{bmatrix} \mathbf{A} \\ \mathbf{A}^2 \\ \vdots \\ \mathbf{A}^N \end{bmatrix} \in \R^{Nn_x \times n_x} , \\[4pt]
				\hat{\mathbf{B}} = \begin{bmatrix} \mathbf{B} & \mathbf{0} & \cdots & \mathbf{0} \\ \mathbf{A}\mathbf{B} & \mathbf{B} & \ddots & \vdots \\ \vdots & \ddots & \ddots & \mathbf{0} \\ \mathbf{A}^{N-1}\mathbf{B} & \cdots & \mathbf{A}\mathbf{B} & \mathbf{B} \end{bmatrix} \in \R^{Nn_x \times Nn_u} ,
			\end{gathered}
		\end{equation}
		and $\hat{\mathbf{Q}} = \mathrm{diag}(\mathbf{Q},\dots,\mathbf{Q},\mathbf{Q}_N) \in \R^{Nn_x \times Nn_x}$, $\hat{\mathbf{R}} = \mathrm{diag}(\mathbf{R},\dots,\mathbf{R}) \in \R^{Nn_u \times Nn_u}$ collect the stage weights over the horizon. Substituting $\mathbf{X}$ into \eqref{eq:ocp} and eliminating the states gives the condensed form in $\mathbf{U} = [\mathbf{u}_0^\top \cdots \mathbf{u}_{N-1}^\top]^\top$,
		\begin{equation}
			\label{eq:condensed}
			\begin{aligned}
				\min_{\mathbf{U}} \ & \tfrac{1}{2} \mathbf{U}^\top \mathbf{H} \mathbf{U} + \mathbf{x}^\top \mathbf{F}^\top \mathbf{U} \\
				\text{s.t.} \ \ & \mathbf{G} \mathbf{U} \leq \mathbf{g} + \mathbf{S} \mathbf{x} ,
			\end{aligned}
		\end{equation}
		where $\mathbf{H} = \hat{\mathbf{B}}^\top \hat{\mathbf{Q}} \hat{\mathbf{B}} + \hat{\mathbf{R}} \succ 0$ and $\mathbf{F} = \hat{\mathbf{B}}^\top \hat{\mathbf{Q}} \hat{\mathbf{A}}$ follow from the prediction matrices, and $\mathbf{G} \in \R^{n_c \times Nn_u}$, $\mathbf{g} \in \R^{n_c}$, $\mathbf{S} \in \R^{n_c \times n_x}$ stack the state, terminal and input constraints.

		Input parametrization restricts the decision variable to a subspace of $n_\eta$ free variables,
		\begin{equation}
			\label{eq:param}
			\mathbf{U} = \mathbf{T} \boldsymbol{\eta} , \qquad \mathbf{T} \in \R^{Nn_u \times n_\eta}, \qquad n_\eta \leq N n_u ,
		\end{equation}
		which turns \eqref{eq:condensed} into a smaller quadratic program with
		\begin{equation}
			\label{eq:projected}
			\mathbf{H}_T = \mathbf{T}^\top \mathbf{H} \mathbf{T}, \qquad \mathbf{F}_T = \mathbf{T}^\top \mathbf{F}, \qquad \mathbf{G}_T = \mathbf{G} \mathbf{T} .
		\end{equation}
		Four choices are considered, all of them standard \cite{rossiter2023review,valencia2011alternative},
		\begin{equation}
			\label{eq:bases}
			\begin{aligned}
				\text{standard:} \quad && \mathbf{T} &= \mathbf{I} , & n_\eta &= N n_u , \\
				\text{control horizon:} \quad && \mathbf{T}_{N_u} &= \begin{bmatrix} \mathbf{I}_{N_u n_u} \\ \mathbf{0} \end{bmatrix} , & n_\eta &= N_u n_u , \\
				\text{Laguerre:} \quad && \mathbf{T}_{\mathrm{Lag}} &= \mathbf{L}^\top \otimes \mathbf{I}_{n_u} , & n_\eta &= n_L n_u , \\
				\text{move-blocking:} \quad && \mathbf{T}_{\mathrm{MB}} &= \mathbf{M} \otimes \mathbf{I}_{n_u} , & n_\eta &= n_{\mathrm{MB}} n_u ,
			\end{aligned}
		\end{equation}
		where $\mathbf{T}_{N_u}$ leaves the first $N_u < N$ moves free and fixes the remainder to zero, $\mathbf{L} \in \R^{n_L \times N}$ is built from $n_L$ discrete Laguerre functions of pole $a_L \in [0,1)$ \cite{wang2004laguerre,wang2008exploiting}, trading a longer effective input profile for the same number of variables, and $\mathbf{M} \in \R^{N \times n_{\mathrm{MB}}}$ is a zero-one blocking matrix holding the input constant over $n_{\mathrm{MB}}$ blocks \cite{tondel2002complexity,makarow2024moveblock}.

		The four bases are related, since at zero pole the Laguerre basis degenerates to the control horizon,
		\begin{equation}
			\label{eq:degen}
			a_L = 0 \quad \Longrightarrow \quad
			\mathbf{L} = [\,\mathbf{I}_{n_L}\ \ \mathbf{0}\,] , \qquad
			\mathbf{T}_{\mathrm{Lag}} = \mathbf{T}_{N_u} \big|_{N_u = n_L} ,
		\end{equation}
		so the control horizon is the zero-pole member of the Laguerre family, and is named as such throughout.

	\subsection{Explicit solution}
		The optimization problem~\eqref{eq:condensed}-\eqref{eq:param} is a multi-parametric quadratic program in the parameter $\mathbf{x}$. Its solution is a continuous piecewise-affine map
		\begin{equation}
			\label{eq:pwa}
			\boldsymbol{\eta}^\star(\mathbf{x}) = \boldsymbol{\Gamma}_i \mathbf{x} + \boldsymbol{\gamma}_i \quad \text{for} \quad \mathbf{x} \in \Theta_i ,
			\qquad i = 1,\dots,N_r ,
		\end{equation}
		over a polyhedral partition $\{\Theta_i\}$ of $N_r$ regions whose union is the feasible subset of a declared parameter box $\Theta$, so that a state at which the law is evaluated lies in some $\Theta_i$ whenever the program is feasible there \cite{bemporad2002explicit,alessio2009survey}. On board, evaluation reduces to locating $\mathbf{x}$ in the partition and applying the corresponding affine law, which gives explicit MPC its data-independent execution time.

		A tree locates $\mathbf{x}$ in a number of tests that grows with the logarithm of the region count rather than with the count itself, so the partition is saved as a binary search tree \cite{arnstrom2024daqp}, whose internal nodes each hold one of the hyperplanes bounding the regions of \eqref{eq:pwa}. As such, evaluation descends the tree by testing $\mathbf{x}$ against one hyperplane per level until it reaches a leaf carrying the affine law to apply. The tree stores less than the full explicit solution, since only the first input block of \eqref{eq:pwa} is ever needed on board, the rest of the predicted sequence being discarded at the next step, and since leaves carrying identical laws are merged, so that it indexes distinct gains and not regions. Its depth therefore bounds the evaluation time $t_{\mathrm{eval}}$ of \eqref{eq:prob1} and its number of distinct leaves fixes the stored memory, both measured in Section~\ref{sec:memory}.

		In the notation of this section, Assumption~\ref{as:theta} reads $\Rset_k^\Wset \subseteq \Theta$ for all $k \geq 0$. It is verified by the same reachability computation used in Section~\ref{sec:hz}, so it requires no additional work, and sizing $\Theta$ from the reachable set instead of from the state box is also a complexity-reduction mechanism in its own right \cite{kvasnica2019complexity}.

		Problem~\ref{prob:impl} is therefore addressed by a parametrization $\mathbf{T}$, which fixes the size of the explicit solution, together with a search tree over the resulting partition. One level of that tree costs a single inner product in $\R^{n_x}$ and the leaf a single $n_u \times n_x$ product, neither of which grows with the horizon. This construction leaves open the behavior of the closed loop it produces, which is Problem~\ref{prob:verif} and occupies the three sections that follow.

% =====================================================================
\section{The Constant-Gain Region of a Parametrized MPC Law}
	\label{sec:cgr}

	Before the solution to Problem~\ref{prob:verif} is introduced, note that the applied input eventually equals a constant gain $\mathbf{K}_T$, which happens wherever the constraints of the underlying program are inactive at the optimum. This will happen in a neighborhood of the origin \cite{scokaert1998clqr}, where the constrained law coincides with the unconstrained minimizer of the parametrized program. For that reason, the remainder of this section computes that region for an arbitrary input parametrization, which then defines a stopping criterion for the verification procedure through the robust invariant set of Section~\ref{sec:cert}.

	Let $\mathbf{E}_0 = [\,\mathbf{I}_{n_u} \ \mathbf{0} \ \cdots \ \mathbf{0}\,] \in \R^{n_u \times N n_u}$ select the first input block, and define the unconstrained minimizer $\boldsymbol{\eta}^{\mathrm{unc}}$ of the parametrized program \eqref{eq:projected},
	\begin{equation}
		\label{eq:Pt}
		\begin{gathered}
			\mathbf{P}_T := \mathbf{H}_T^{-1} \mathbf{F}_T = (\mathbf{T}^\top \mathbf{H} \mathbf{T})^{-1} \mathbf{T}^\top \mathbf{F} \in \R^{n_\eta \times n_x} , \\[3pt]
			\boldsymbol{\eta}^{\mathrm{unc}}(\mathbf{x}) = -\mathbf{P}_T \mathbf{x} ,
		\end{gathered}
	\end{equation}
	together with the induced gain and closed-loop matrix
	\begin{equation}
		\label{eq:Kt}
		\mathbf{K}_T := \mathbf{E}_0 \mathbf{T} \mathbf{P}_T \in \R^{n_u \times n_x} , \qquad \Acl := \mathbf{A} - \mathbf{B} \mathbf{K}_T \in \R^{n_x \times n_x} .
	\end{equation}

	Each parametrization $\mathbf{T}$ induces its own gain $\mathbf{K}_T$, to which the applied input reduces. 

	\begin{lemma}[Constant-gain region, closed form]
		\label{thm:cgr} Let $\mathbf{H}_T \succ 0$ and define
		\begin{equation}
			\label{eq:C}
			\Cset_T := \left\{ \mathbf{x} \in \Xset \ : \ \left( -\mathbf{G}_T \mathbf{P}_T - \mathbf{S} \right) \mathbf{x} \leq \mathbf{g} \right\} .
		\end{equation}
		Then the parametrized program \eqref{eq:condensed}-\eqref{eq:param} attains its unconstrained minimizer on $\Cset_T$,
		\begin{equation}
			\label{eq:collapse}
			\begin{gathered}
				\mathbf{x} \in \Cset_T \quad \Longrightarrow \quad \boldsymbol{\eta}^\star(\mathbf{x}) = -\mathbf{P}_T \mathbf{x} , \\[3pt]
				\mathbf{u}^\star(\mathbf{x}) = \mathbf{E}_0 \mathbf{T} \boldsymbol{\eta}^\star(\mathbf{x}) = -\mathbf{K}_T \mathbf{x} ,
			\end{gathered}
		\end{equation}
		a single constant gain, and $\Cset_T$ is a polyhedron with at most $n_c + 2n_x$ facets.
	\end{lemma}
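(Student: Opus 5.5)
The plan is to use that the parametrized program is a strictly convex QP whenever $\mathbf{H}_T \succ 0$. Its objective in $\boldsymbol{\eta}$ is $\tfrac12 \boldsymbol{\eta}^\top \mathbf{H}_T \boldsymbol{\eta} + \mathbf{x}^\top \mathbf{F}_T^\top \boldsymbol{\eta}$, obtained by substituting \eqref{eq:param} into \eqref{eq:condensed} and using \eqref{eq:projected}. The constraint is $\mathbf{G}_T \boldsymbol{\eta} \leq \mathbf{g} + \mathbf{S}\mathbf{x}$. The argument then proceeds in three steps, in this order.

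\textbf{Step 1: the unconstrained minimizer.} Setting the gradient $\mathbf{H}_T \boldsymbol{\eta} + \mathbf{F}_T \mathbf{x}$ to zero gives the unique stationary point $\boldsymbol{\eta}^{\mathrm{unc}}(\mathbf{x}) = -\mathbf{H}_T^{-1}\mathbf{F}_T \mathbf{x} = -\mathbf{P}_T \mathbf{x}$, as in \eqref{eq:Pt}. Strict convexity makes it the unique global minimizer of the objective over all of $\R^{n_\eta}$.

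\textbf{Step 2: feasibility and optimality on $\Cset_T$.} Evaluating the constraint at $\boldsymbol{\eta}^{\mathrm{unc}}$ gives $-\mathbf{G}_T\mathbf{P}_T \mathbf{x} \leq \mathbf{g} + \mathbf{S}\mathbf{x}$, which rearranges to $(-\mathbf{G}_T\mathbf{P}_T - \mathbf{S})\mathbf{x} \leq \mathbf{g}$. This is exactly the defining inequality of \eqref{eq:C}. So for $\mathbf{x}\in\Cset_T$ the global minimizer is feasible, and a feasible global minimizer is the constrained minimizer. Uniqueness follows from strict convexity, which yields $\boldsymbol{\eta}^\star(\mathbf{x}) = -\mathbf{P}_T\mathbf{x}$. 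Equivalently, the KKT conditions hold with all multipliers equal to zero. Applying $\mathbf{E}_0\mathbf{T}$ and using the definition \eqref{eq:Kt} gives $\mathbf{u}^\star(\mathbf{x}) = -\mathbf{K}_T\mathbf{x}$.

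\textbf{Step 3: facet count.} The set $\Cset_T$ is the intersection of the $n_c$ half-spaces in \eqref{eq:C} with $\Xset$. Under \eqref{eq:constraintsCW}, $\Xset$ is a box described by $2n_x$ half-spaces. An intersection of $n_c + 2n_x$ half-spaces is a polyhedron with at most that many facets, since redundant rows only lower the count.

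I do not expect a real obstacle. The one point to state carefully is what ``$\boldsymbol{\eta}^\star$'' refers to. If the state constraints in $\mathbf{G}$ start at $\mathbf{x}_1$, then membership $\mathbf{x}\in\Xset$ is not implied by the QP constraints, which is why it is intersected explicitly in \eqref{eq:C}. In either case the QP identity holds for every $\mathbf{x}$ satisfying the $n_c$ inequalities, so intersecting with $\Xset$ only shrinks the set, and the implication \eqref{eq:collapse} is preserved.
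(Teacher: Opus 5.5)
Your proposal is correct and follows the paper's proof essentially step for step: strict convexity from $\mathbf{H}_T \succ 0$, substitution of $\boldsymbol{\eta}^{\mathrm{unc}} = -\mathbf{P}_T\mathbf{x}$ into $\mathbf{G}_T\boldsymbol{\eta} \leq \mathbf{g} + \mathbf{S}\mathbf{x}$ to recover \eqref{eq:C}, and the facet count from the $n_c$ rows plus the $2n_x$ box half-spaces. Your closing remark, that only the implication is needed because intersecting with $\Xset$ merely shrinks the set, is slightly more precise than the paper's wording that feasibility of $\boldsymbol{\eta}^{\mathrm{unc}}$ is \emph{equivalent} to $\mathbf{x} \in \Cset_T$.
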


	\begin{proof}
		Since $\mathbf{H}_T \succ 0$ the program is strictly convex, so its minimizer is unique and coincides with the unconstrained one whenever the latter is feasible. Substituting $\boldsymbol{\eta}^{\mathrm{unc}} = -\mathbf{P}_T \mathbf{x}$ into the constraint of \eqref{eq:condensed} gives
		\begin{equation}
			\label{eq:Cderiv}
			\mathbf{G}_T \left( -\mathbf{P}_T \mathbf{x} \right) \ \leq \ \mathbf{g} + \mathbf{S}\mathbf{x}
			\qquad \Longleftrightarrow \qquad
			\left( -\mathbf{G}_T \mathbf{P}_T - \mathbf{S} \right) \mathbf{x} \ \leq \ \mathbf{g} ,
		\end{equation}
		whose second form is the description in \eqref{eq:C}, so feasibility of $\boldsymbol{\eta}^{\mathrm{unc}}$ is equivalent to $\mathbf{x} \in \Cset_T$, and \eqref{eq:collapse} follows with $\mathbf{K}_T$ from \eqref{eq:Kt}. The description \eqref{eq:C} intersects $n_c$ halfspaces with the box $\Xset$, giving the facet count.
	\end{proof}

	Lemma~\ref{thm:cgr} restates a classical fact, since $\Cset_T$ is the critical region associated with the empty active set, whose description follows the standard multi-parametric construction \cite{bemporad2002explicit}, and the coincidence of the constrained and unconstrained solutions follows from \cite{scokaert1998clqr}. Both the optimizer and the constraints are affine in $\mathbf{x}$, so $\Cset_T$ follows from the data of the condensed program by matrix multiplication alone. Equations~\eqref{eq:Pt}, \eqref{eq:Kt} and~\eqref{eq:C} apply unchanged to every parametrization, since only $(\mathbf{P}_T, \mathbf{G}_T, \mathbf{K}_T)$ change. As such, $\Cset_T$ is available for any $\mathbf{T}$ at negligible cost, and can be computed ahead of the reachability problem stated against it.

	\begin{remark}[Each basis reduces to its own gain]
		For $\mathbf{T} = \mathbf{I}$ the terminal weight $\mathbf{Q}_N$ is the Riccati solution, so $\mathbf{K}_T$ is the infinite-horizon LQR gain and $\Cset_T$ is the classical region where the constrained problem coincides with the unconstrained regulator \cite{scokaert1998clqr}. For $\mathbf{T} \neq \mathbf{I}$ this is no longer true, since each restricted basis reduces to \emph{its own} constant gain $\mathbf{K}_T$, which is not the LQR gain. This is why $\Cset_T$ is named a constant-gain region and not an LQR region.
	\end{remark}

% =====================================================================
\section{Robust Invariance and Finite-Step Reduction to the Constant Gain Region}
	\label{sec:cert}

	Section~\ref{sec:cgr} supplies the region $\Cset_T$ on which the applied control action equals $-\mathbf{K}_T \mathbf{x}$ \emph{at the current step}, although it is not necessarily invariant, so the successor $\Acl\mathbf{x}$ of a state inside it can fall outside. Invariance under a disturbance is stronger still, since every successor the disturbance can produce must return to the region, $\Acl \mathbf{x} + \mathbf{w} \in \Cset_T$ for every $\mathbf{w} \in \Wset$. Both are handled by the classical maximal-admissible-set construction \cite{gilbert1991maximal}, tightened at each step by the worst case over $\Wset$ \cite{kolmanovsky1998disturbance}, which admits a closed form here.

	For an autonomous linear system constrained to a polyhedron, the maximal admissible set is the largest set of initial states whose entire trajectory satisfies the constraint. It is itself a polyhedron, and it is finitely determined whenever the closed-loop matrix is Schur stable and the constraint set is bounded with the origin in its interior \cite{gilbert1991maximal}. Under an additive disturbance the same recursion applies with each constraint row tightened by the support function of the disturbance set, which yields the maximal robust positively invariant subset \cite{kolmanovsky1998disturbance}. Sets built this way are the classical route to stability guarantees in MPC, where they act as terminal constraints \cite{mayne2000constrained}, although the guarantee they carry there is asymptotic. Closed-form approximations avoid the recursion altogether at the price of a smaller set \cite{SILVESTRE2024126}, and the containment test of Section~\ref{sec:hz} applies to either, since it requires only a description of the set the reachable tube must lie inside.

	Write $h_\Wset(\mathbf{d}) := \max_{\mathbf{w} \in \Wset} \mathbf{d}^\top \mathbf{w}$ for the support function of $\Wset$, and let $\Cset_T = \{\mathbf{x} : \mathbf{D}\mathbf{x} \leq \mathbf{h}\}$, $\mathbf{D} \in \R^{n_f \times n_x}$.

	\begin{theorem}[Maximal robust positively invariant subset of $\Cset_T$]
		\label{thm:rpi} Let $\Pset_j$ collect the states admissible $j$ steps ahead against every disturbance sequence, and define $\Oinf^\Wset$ as their intersection,
		\begin{equation}
			\label{eq:Oinf}
			\begin{gathered}
				\Pset_j := \left\{ \mathbf{x} \ : \ \mathbf{D}\Acl^{j} \mathbf{x} \ \leq \ \mathbf{h} - \sum_{i=0}^{j-1} h_\Wset\!\left( (\mathbf{D}\Acl^{i})^\top \right) \right\} , \\[3pt]
				\Oinf^\Wset := \bigcap_{j \in \N_0} \Pset_j ,
			\end{gathered}
		\end{equation}
		where $h_\Wset$ acts row-wise, so that $\Pset_0 = \Cset_T$. Then
		\begin{enumerate}
			\item \label{it:rpi-sub} $\Oinf^\Wset \subseteq \Cset_T$;
			\item \label{it:rpi-inv} $\Acl \Oinf^\Wset \oplusm \Wset \subseteq \Oinf^\Wset$;
			\item \label{it:rpi-max} $\Oinf^\Wset = \bigcup \left\{ \mathcal{V} \subseteq \Cset_T \ : \ \Acl \mathcal{V} \oplusm \Wset \subseteq \mathcal{V} \right\}$, the union of every robustly invariant subset of $\Cset_T$;
			\item \label{it:rpi-fin} if $\Acl$ is Schur, $\Wset$ is bounded, $\Cset_T$ is bounded with $\mathbf{0} \in \operatorname{int} \Cset_T$, and the accumulated tightening leaves $\mathbf{0} \in \operatorname{int} \Oinf^\Wset$, the intersection is finitely determined,
			\begin{equation}
				\label{eq:Oinffin}
				\exists \, j^\star < \infty \ : \quad \Oinf^\Wset = \bigcap_{j=0}^{j^\star} \Pset_j .
			\end{equation}
		\end{enumerate}
		For a box disturbance set the support function is available in closed form,
		\begin{equation}
			\label{eq:boxsupp}
			\Wset = \left\{ \mathbf{w} : \lvert \mathbf{w} \rvert \leq \bar{\mathbf{w}} \right\}
			\quad \Longrightarrow \quad
			h_\Wset\!\left( (\mathbf{D}\Acl^{i})^\top \right) = \lvert \mathbf{D}\Acl^{i} \rvert \, \bar{\mathbf{w}} ,
		\end{equation}
		so \eqref{eq:Oinf} is evaluated by matrix multiplication alone.
	\end{theorem}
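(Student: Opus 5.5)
The plan rests on one characterization: $\mathbf{x} \in \Pset_j$ if and only if every $j$-step disturbed trajectory of $\mathbf{x}^+ = \Acl \mathbf{x} + \mathbf{w}$ lands in $\Cset_T$. Unrolling the recursion gives $\mathbf{x}_j = \Acl^j \mathbf{x}_0 + \sum_{i=0}^{j-1} \Acl^{i} \mathbf{w}_{j-1-i}$. For each row $\mathbf{d}_r^\top$ of $\mathbf{D}$, the worst case of $\mathbf{d}_r^\top \mathbf{x}_j$ over independent $\mathbf{w}_i \in \Wset$ splits into a sum of separate maximizations. This yields $\mathbf{d}_r^\top \Acl^j \mathbf{x}_0 + \sum_{i} h_\Wset\big((\mathbf{d}_r^\top \Acl^i)^\top\big)$, and the maxima are attained because $\Wset$ is compact (Assumption~\ref{as:W}). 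Hence $\Oinf^\Wset$ is exactly the set of initial states whose entire disturbed trajectory stays in $\Cset_T$.

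The first three items follow from this characterization.
\begin{itemize}
\item Item~\ref{it:rpi-sub} is immediate from $\Pset_0 = \Cset_T$, since the empty sum vanishes.
\item For item~\ref{it:rpi-inv}, take $\mathbf{x} \in \Oinf^\Wset$ and $\mathbf{w} \in \Wset$. Any trajectory starting at $\Acl\mathbf{x}+\mathbf{w}$ is the tail of a trajectory from $\mathbf{x}$, so it stays in $\Cset_T$ and $\Acl\mathbf{x}+\mathbf{w} \in \Oinf^\Wset$. Algebraically, this is the inequality $\mathbf{D}\Acl^{j}(\Acl\mathbf{x}+\mathbf{w}) \le \mathbf{D}\Acl^{j+1}\mathbf{x} + h_\Wset((\mathbf{D}\Acl^j)^\top)$ combined with $\mathbf{x} \in \Pset_{j+1}$.
\item For item~\ref{it:rpi-max}, item~\ref{it:rpi-inv} puts $\Oinf^\Wset$ itself in the family, so it is contained in the union. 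Conversely, any robustly invariant $\mathcal{V} \subseteq \Cset_T$ keeps every disturbed trajectory from $\mathcal{V}$ inside $\mathcal{V} \subseteq \Cset_T$, so $\mathcal{V} \subseteq \Pset_j$ for all $j$ by the characterization.
\end{itemize}

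The main obstacle is the finite-determination item~\ref{it:rpi-fin}. I would follow the Gilbert--Tan / Kolmanovsky--Gilbert argument.

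\begin{enumerate}
\item Schur stability gives constants $M$ and $\rho<1$ with $\|\Acl^j\| \le M\rho^j$. Boundedness of $\Wset$ then makes the tightening sum $\mathbf{t}_j := \sum_{i<j} h_\Wset((\mathbf{D}\Acl^i)^\top)$ converge monotonically to some $\mathbf{t}_\infty$.
\item The hypothesis $\mathbf{0} \in \operatorname{int}\Oinf^\Wset$ gives a ball $\mathcal{B}_\epsilon \subseteq \Oinf^\Wset$. Since $\mathbf{0} \in \Pset_j$ for every $j$, this yields a uniform margin $\mathbf{h} - \mathbf{t}_\infty \ge \mathbf{0}$. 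I need a strictly positive margin, and I will extract it from the ball: $\mathbf{D}\Acl^j \mathbf{x} \le \mathbf{h}-\mathbf{t}_j$ for all $\|\mathbf{x}\|\le\epsilon$ implies $\epsilon\|\mathbf{d}_r^\top\Acl^j\| \le h_r - t_{j,r}$. The subtle step is showing this is bounded away from zero uniformly in $j$, including rows where $\mathbf{d}_r^\top \Acl^j \to 0$.
\item If the margin bound degenerates, I would use the standard device instead. The first intersection $\bigcap_{j\le j_0}\Pset_j$ is bounded because $\Pset_0=\Cset_T$ is bounded. Let $R$ be its radius. On this set, $\|\mathbf{D}\Acl^{j}\mathbf{x}\| \le \|\mathbf{D}\|M\rho^j R$ decays geometrically, while the right-hand side $\mathbf{h}-\mathbf{t}_j$ decreases to $\mathbf{h}-\mathbf{t}_\infty$.
\item To get strict positivity of $\mathbf{h}-\mathbf{t}_\infty$, I would argue by contradiction. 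If $h_r = t_{\infty,r}$ for some row, pick $j$ with $h_r - t_{j,r}$ arbitrarily small. The ball $\mathcal{B}_\epsilon$ must then satisfy $\epsilon\|\mathbf{d}_r^\top\Acl^j\|\to 0$. I would handle the rows with $\mathbf{d}_r^\top\Acl^j$ eventually zero separately, since those constraints reduce to $0\le h_r - t_{j,r}$ and hold automatically from $\mathbf{0}\in\Oinf^\Wset$.
\item With $\mathbf{h}-\mathbf{t}_\infty \ge \delta\mathbf{1} > \mathbf{0}$ in hand, choose $j^\star$ such that $\|\mathbf{D}\|M\rho^{j}R \le \delta$ for all $j>j^\star$. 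Every constraint $\Pset_j$ with $j > j^\star$ is then redundant over $\bigcap_{j\le j^\star}\Pset_j$, which gives \eqref{eq:Oinffin}.
\end{enumerate}

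Finally, \eqref{eq:boxsupp} follows because the maximum of $\mathbf{d}^\top\mathbf{w}$ over a box $|\mathbf{w}|\le\bar{\mathbf{w}}$ separates coordinatewise into $\sum_k |d_k|\bar w_k$, which applied row-wise gives $|\mathbf{D}\Acl^i|\,\bar{\mathbf{w}}$.
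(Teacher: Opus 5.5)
Your items~\ref{it:rpi-sub}--\ref{it:rpi-max} and the box formula are correct and follow the paper's route: unroll the disturbed recursion, separate the maxima over independent $\mathbf{w}_i$, re-index to get invariance, and observe that any robustly invariant $\mathcal{V}\subseteq\Cset_T$ lies in every $\Pset_j$. The gap is step~4 of item~\ref{it:rpi-fin}.

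From the ball you get $\epsilon\|\Acl^{j\top}\mathbf{d}_r\|\le h_r-t_{j,r}$ for all $j$. If $h_r=t_{\infty,r}$, the right-hand side equals $\sum_{i\ge j}h_\Wset(\Acl^{i\top}\mathbf{d}_r)$. Both sides tending to zero is not a contradiction, because Schur stability already forces $\|\Acl^{j\top}\mathbf{d}_r\|\to 0$. In fact, if $\Wset$ contains a ball of radius $\beta$, then $h_\Wset(\mathbf{v})\ge\beta\|\mathbf{v}\|$, so the right-hand side is at least $\beta\|\Acl^{j\top}\mathbf{d}_r\|$ and the inequality holds for every $j$ once $\epsilon\le\beta$. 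So $\mathbf{0}\in\operatorname{int}\Oinf^\Wset$ yields only $\mathbf{h}-\mathbf{t}_\infty\ge\mathbf{0}$. Step~5 needs $\delta>0$ and has nothing to stand on.

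This cannot be repaired, because item~\ref{it:rpi-fin} is false under its literal hypothesis. Take the following data.
\begin{itemize}
\item $n_x=2$ and $\Acl=\tfrac12\mathbf{R}_\theta$, with $\mathbf{R}_\theta$ a rotation and $\theta/\pi$ irrational.
\item $\Wset=[-1,1]^2$ and $\mathcal{F}=\bigoplus_{i\ge 0}\Acl^i\Wset$, so that $h_{\mathcal{F}}(\mathbf{v})=\sum_{i\ge 0}h_\Wset(\Acl^{i\top}\mathbf{v})$.
\item $\Cset_T$ the box with rows $\mathbf{d}_r\in\{\pm\mathbf{e}_1,\pm\mathbf{e}_2\}$ and $h_r=h_{\mathcal{F}}(\mathbf{d}_r)$.
\end{itemize}
Then $h_r-t_{j,r}=h_{\mathcal{F}}(\Acl^{j\top}\mathbf{d}_r)$. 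Hence each $\Pset_j$ is an intersection of supporting half-planes of $\mathcal{F}$ with normals $\mathbf{R}_{-j\theta}\mathbf{d}_r$, and these normals are dense on the circle. It follows that $\Oinf^\Wset=\mathcal{F}\supseteq\Wset$, so $\mathbf{0}\in\operatorname{int}\Oinf^\Wset$ and every stated hypothesis holds.

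However, the face of $\mathcal{F}$ in a direction is the sum of the summands' faces. So $\mathcal{F}$ has an edge with normal angle $i\theta+k\pi/2$ for every $i$, giving infinitely many edges. Thus $\mathcal{F}$ is not a polytope, and no finite intersection of the $\Pset_j$ equals it.

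The missing ingredient is the Kolmanovsky--Gilbert hypothesis $\mathbf{h}-\mathbf{t}_\infty>\mathbf{0}$ componentwise, i.e.\ $\mathbf{0}\in\operatorname{int}\bigl(\Cset_T\ominus\bigoplus_{i\ge 0}\Acl^i\Wset\bigr)$. It implies $\mathbf{0}\in\operatorname{int}\Oinf^\Wset$, but not conversely. With it, your steps~1, 3 and~5 complete the proof as written.

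The paper's one-line argument (``$\mathbf{D}\Acl^{j}\to\mathbf{0}$ geometrically while the tightening converges'') tacitly relies on the same strict margin. You were right to flag this point, but it is the stated hypothesis that must change, not just your argument.
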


	\begin{proof}
		Iterating \eqref{eq:plant} under $\mathbf{u} = -\mathbf{K}_T \mathbf{x}$ gives
		\begin{equation}
			\label{eq:iterate}
			\mathbf{x}_j = \Acl^j \mathbf{x}_0 + \sum_{i=0}^{j-1}\Acl^{i} \mathbf{w}_{j-1-i} ,
		\end{equation}
		so imposing $\mathbf{x}_j \in \Cset_T$ for every admissible disturbance sequence reads, row-wise,
		\begin{equation}
			\label{eq:rowwise}
			\mathbf{D}\Acl^j \mathbf{x}_0 \ \leq \ \mathbf{h} - \max_{\mathbf{w} \in \Wset^{j}} \ \mathbf{D}
			\sum_{i=0}^{j-1} \Acl^{i} \mathbf{w}_{j-1-i} .
		\end{equation}
		The $\mathbf{w}_i$ are independent, so the maximum separates,
		\begin{equation}
			\label{eq:separate}
			\max_{\mathbf{w} \in \Wset^{j}} \ \mathbf{D} \sum_{i=0}^{j-1} \Acl^{i} \mathbf{w}_{j-1-i}
			\ = \ \sum_{i=0}^{j-1} h_\Wset\!\left( (\mathbf{D}\Acl^{i})^\top \right) ,
		\end{equation}
		so substituting \eqref{eq:separate} into \eqref{eq:rowwise} identifies the $j$-step admissible set with $\Pset_j$, and intersecting over $j \in \N_0$ gives \eqref{eq:Oinf}. Property~\ref{it:rpi-sub} is then immediate from $\Pset_0 = \Cset_T$. Re-indexing \eqref{eq:Oinf} in $j$ proves property~\ref{it:rpi-inv}, the standard argument of \cite{kolmanovsky1998disturbance}. For property~\ref{it:rpi-max}, any $\mathcal{V}$ in the union satisfies $\Acl^{j}\mathcal{V} \oplusm \left( \bigoplus_{i=0}^{j-1} \Acl^{i}\Wset \right) \subseteq \mathcal{V} \subseteq \Cset_T$ by induction, hence $\mathcal{V} \subseteq \Pset_j$ for every $j$ and so $\mathcal{V} \subseteq \Oinf^\Wset$. Conversely, $\Oinf^\Wset$ itself belongs to the union by properties~\ref{it:rpi-sub}-\ref{it:rpi-inv}, and the two inclusions give equality. For property~\ref{it:rpi-fin}, $\Acl$ Schur gives $\mathbf{D}\Acl^{j} \to \mathbf{0}$ geometrically while the tightening converges, so there is a finite $j^\star$ beyond which $\bigcap_{i=0}^{j^\star} \Pset_i \subseteq \Pset_j$ for all $j > j^\star$, which is \eqref{eq:Oinffin}, the finite-determination argument of \cite{gilbert1991maximal}. \eqref{eq:boxsupp} is the support function of a symmetric box. \qedhere
	\end{proof}

	Setting $\Wset = \{\mathbf{0}\}$ recovers the nominal construction of Gilbert and Tan \cite{gilbert1991maximal}, written $\Oinf$ throughout. The disturbance enters only through the tightening term, so the number of facets of $\Oinf^\Wset$ is typically unchanged, with only the right-hand side moving, until the tightening makes the set empty. An empty $\Oinf^\Wset$ means the disturbance is too large for the constant-gain region to retain any robustly invariant subset. As such, there is no step beyond which the applied input can be guaranteed to be a single fixed gain.

	\begin{remark}[Maximal within $\Cset_T$, not maximal]
		\label{rem:maximal} Larger robustly invariant sets of the closed loop generally exist. The piecewise-affine law \eqref{eq:pwa} induces a different closed-loop matrix $\mathbf{A} - \mathbf{B} \mathbf{E}_0\boldsymbol{\Gamma}_i$ on each critical region, and a set invariant under that switched system would extend well beyond $\Cset_T$, but the applied law on it is no longer a single gain, and obtaining it needs the enumerated partition, which Lemma~\ref{thm:cgr} was constructed to avoid. Since Theorem~\ref{thm:cert} states a sufficient condition, the larger set would lower $k^\star$ without changing what is guaranteed.
	\end{remark}

	It remains to define the tube along which those sets are propagated, so let $\Rset_0^\Wset := \Xset_0$ and let $\Rset_{k+1}^\Wset$ denote the one-step reachable set of the closed loop of \eqref{eq:plant} under the MPC law, including the disturbance,
	\begin{equation}
		\label{eq:tube}
		\Rset_{k+1}^\Wset = \Phi\!\left( \Rset_k^\Wset \right) \oplusm \Wset ,
	\end{equation}
	where $\Phi$ maps a set of states through one nominal closed-loop step. The nominal tube $\Rset_0 := \Xset_0$, $\Rset_{k+1} := \Phi(\Rset_k)$ is the case $\Wset = \{\mathbf{0}\}$. Section~\ref{sec:hz} makes $\Phi$ explicit.

	\begin{assumption}
		\label{as:feas} The program of \eqref{eq:condensed} is feasible $\forall \mathbf{x} \in \Rset_k^\Wset,\ \forall k \leq k^\star$.
	\end{assumption}

	Assumption~\ref{as:feas} is checkable by the same mixed-integer tests used for containment in Section~\ref{sec:hz}, and it is verified for the scenario of Section~\ref{sec:results}. Beyond $k^\star$ it is automatic, since $\Oinf^\Wset \subseteq \Cset_T$ and $\boldsymbol{\eta}^\star(\mathbf{x}) = -\mathbf{P}_T \mathbf{x}$ is feasible $\forall \mathbf{x} \in \Cset_T$ by construction.

	\begin{theorem}[Robust finite-step reduction to the constant gain]
		\label{thm:cert} Let Assumptions~\ref{as:W}, \ref{as:margin}, \ref{as:theta} and \ref{as:feas} hold and suppose there is a finite $k^\star$ with
		\begin{equation}
			\label{eq:certcond}
			\Rset_{k^\star}^\Wset \ \subseteq \ \Oinf^\Wset .
		\end{equation}
		Then, $\forall \mathbf{x}_0 \in \Xset_0$ and $\forall \{\mathbf{w}_k\}_{k \geq 0} \subset \Wset$,
		\begin{enumerate}
			\item \label{it:cert-gain} the optimizer returns the fixed gain, so the loop is linear,
			\begin{equation}
				\label{eq:certcollapse}
				\mathbf{u}_k = -\mathbf{K}_T \mathbf{x}_k , \qquad
				\mathbf{x}_{k+1} = \Acl \mathbf{x}_k + \mathbf{w}_k , \qquad \forall k \geq k^\star ;
			\end{equation}
			\item \label{it:cert-constr} $\mathbf{x}_k \in \Xset$ and $\mathbf{u}_k \in \Uset,\ \forall k \geq 0$;
			\item \label{it:cert-conv} if $\Acl$ is Schur, then $\mathbf{x}_k \to \Xset_\infty$ as $k \to \infty$, the set $\Xset_\infty$ of Problem~\ref{prob:verif} being the minimal robust positively invariant set of $(\Acl, \Wset)$.
		\end{enumerate}
	\end{theorem}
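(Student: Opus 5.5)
The plan is to split the horizon at $k^\star$ and treat the two segments with different tools: the reachable tube for $k \le k^\star$, and robust invariance of $\Oinf^\Wset$ (Theorem~\ref{thm:rpi}) for $k \ge k^\star$. Fix any $\mathbf{x}_0 \in \Xset_0$ and any disturbance sequence in $\Wset$. The actual trajectory satisfies $\mathbf{x}_k \in \Rset_k^\Wset$ for every $k \ge 0$. This follows by induction on $k$ from the tube recursion \eqref{eq:tube}. The base case is $\mathbf{x}_0 \in \Rset_0^\Wset = \Xset_0$, and the step uses that $\mathbf{x}_{k+1}$ is the nominal closed-loop image of $\mathbf{x}_k$ plus some $\mathbf{w}_k \in \Wset$. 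The map $\Phi$ is well defined along the way: by Assumption~\ref{as:theta} every visited state lies in $\Theta$, and by Assumption~\ref{as:feas} the program is feasible there. Hence the PWA law \eqref{eq:pwa} applies at each step.

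\emph{Item~\ref{it:cert-gain}.} From \eqref{eq:certcond} we get $\mathbf{x}_{k^\star} \in \Oinf^\Wset$. We then show by induction that $\mathbf{x}_k \in \Oinf^\Wset$ for all $k \ge k^\star$. Suppose $\mathbf{x}_k \in \Oinf^\Wset$. Property~\ref{it:rpi-sub} of Theorem~\ref{thm:rpi} places it in $\Cset_T$, so Lemma~\ref{thm:cgr} gives $\mathbf{u}_k = -\mathbf{K}_T\mathbf{x}_k$ and $\mathbf{x}_{k+1} = \Acl\mathbf{x}_k + \mathbf{w}_k$. Property~\ref{it:rpi-inv} then returns $\mathbf{x}_{k+1}$ to $\Oinf^\Wset$. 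This yields \eqref{eq:certcollapse}, and it also shows that feasibility beyond $k^\star$ needs no separate assumption.

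\emph{Item~\ref{it:cert-constr}.} For every $k$, $\mathbf{u}_k$ is generated by a feasible solution of \eqref{eq:condensed}. The first-step constraints of that program therefore give $\mathbf{u}_k \in \Uset$ and $\mathbf{x}_k \in \Xset \ominus \Wset \subseteq \Xset$ by Assumption~\ref{as:margin}; this uses $\mathbf{0} \in \Wset$ from Assumption~\ref{as:W}. For $k \le k^\star$, feasibility comes from Assumption~\ref{as:feas}. For $k \ge k^\star$, it comes from $\mathbf{x}_k \in \Cset_T$ together with the feasibility of $-\mathbf{P}_T\mathbf{x}_k$ on $\Cset_T$. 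The one point to check carefully is that the stacked constraints $\mathbf{G}\mathbf{U} \le \mathbf{g} + \mathbf{S}\mathbf{x}$ contain the current-state and first-input rows. If the current-state row is absent, I would instead use the inclusion $\Cset_T \subseteq \Xset$ from \eqref{eq:C} for $k \ge k^\star$. For earlier steps I would use that $\mathbf{x}_k$ is the one-step prediction from a feasible program at $k-1$, which lies in $\Xset \ominus \Wset$, plus a disturbance.

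\emph{Item~\ref{it:cert-conv}.} I expect this to be the main obstacle, because it needs an actual convergence statement rather than set bookkeeping. For $k \ge k^\star$, \eqref{eq:iterate} gives
$\mathbf{x}_k = \Acl^{k-k^\star}\mathbf{x}_{k^\star} + \sum_{i=0}^{k-k^\star-1}\Acl^i\mathbf{w}_{k-1-i}$.
The second term lies in $\bigoplus_{i=0}^{k-k^\star-1}\Acl^i\Wset \subseteq \Xset_\infty := \bigoplus_{i\ge0}\Acl^i\Wset$, which is the minimal RPI set. This Minkowski series converges to a compact set because $\Acl$ is Schur and $\Wset$ is compact. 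The first term tends to zero geometrically, since $\mathbf{x}_{k^\star}$ lies in the bounded set $\Oinf^\Wset \subseteq \Cset_T \subseteq \Xset$. Therefore $\operatorname{dist}(\mathbf{x}_k,\Xset_\infty) \le \lVert\Acl^{k-k^\star}\mathbf{x}_{k^\star}\rVert \to 0$, and when $\Wset = \{\mathbf{0}\}$ this reduces to $\mathbf{x}_k \to \mathbf{0}$. The remaining subtlety is to justify compactness of the infinite Minkowski sum and its identification with the minimal RPI set. For this I would cite \cite{kolmanovsky1998disturbance}, or bound each $\Acl^i\Wset$ using $\lVert\Acl^i\rVert \le c\rho^i$ with $\rho < 1$.
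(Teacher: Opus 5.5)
Your proposal is correct and follows the paper's proof. It runs the induction from $k^\star$ using the first two properties of Theorem~\ref{thm:rpi} together with Lemma~\ref{thm:cgr}, and splits constraint satisfaction at $k^\star$: the predicted successor lies in $\Xset \ominus \Wset$ before it, and $\Oinf^\Wset \subseteq \Cset_T \subseteq \Xset$ after it. That is your fallback argument, and it is exactly the paper's, since $\Cset_T$ intersects $\Xset$ separately and the stacked rows do not constrain the current state. Convergence then comes from the standard result for a Schur system driven by a bounded disturbance. The only differences are that you spell out the induction $\mathbf{x}_k \in \Rset_k^\Wset$, which the paper takes for granted, and you prove item~3 directly through the Minkowski series $\bigoplus_{i \geq 0} \Acl^i \Wset$ instead of citing \cite{kolmanovsky1998disturbance}.
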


	\begin{proof}
		\eqref{eq:certcond} places $\mathbf{x}_{k^\star} \in \Oinf^\Wset$. For any $\mathbf{x}_k \in \Oinf^\Wset$, property~\ref{it:rpi-sub} of Theorem~\ref{thm:rpi} with Lemma~\ref{thm:cgr} fixes the applied law, and property~\ref{it:rpi-inv} returns the successor to the same set,
		\begin{equation}
			\label{eq:certstep}
			\mathbf{x}_k \in \Oinf^\Wset
			\ \Longrightarrow \
			\begin{gathered}
				\mathbf{u}_k = -\mathbf{K}_T \mathbf{x}_k \ \Longrightarrow \\[2pt]
				\mathbf{x}_{k+1} = \Acl \mathbf{x}_k + \mathbf{w}_k \in \Oinf^\Wset , \quad \forall \mathbf{w}_k \in \Wset ,
			\end{gathered}
		\end{equation}
		so induction from $k^\star$ gives property~\ref{it:cert-gain}. Property~\ref{it:cert-constr} follows, for $k \geq k^\star$, from $\mathbf{x}_k \in \Oinf^\Wset \subseteq \Cset_T \subseteq \Xset$ for the state, and from the input rows of $\mathbf{G}$ being among those defining $\Cset_T$ in \eqref{eq:C}, which give $\mathbf{u}_k \in \Uset$. For $k < k^\star$ it follows from Assumptions~\ref{as:margin} and~\ref{as:feas}, since a feasible program of \eqref{eq:condensed} places the predicted successor in $\Xset \ominus \Wset$, leaving the realized one in $\Xset$ for every $\mathbf{w}_k \in \Wset$. Property~\ref{it:cert-conv} is the standard convergence result for a Schur linear system driven by a bounded disturbance \cite{kolmanovsky1998disturbance}. \qedhere
	\end{proof}

	Two features of Theorem~\ref{thm:cert} are used repeatedly in what follows. First, the guarantee is over a \emph{set} of initial conditions and over \emph{all} admissible disturbance sequences, so it is a statement about the mission itself and not about any one simulated trajectory. Second, the disturbance appears in exactly two places, a Minkowski sum in \eqref{eq:tube} and a support-function tightening in \eqref{eq:Oinf}, and nowhere else. Nothing in Lemma~\ref{thm:cgr} changes, and, as the next section shows, neither does the combinatorial size of the reachability problem.

% =====================================================================
\section{Exact Closed-Loop Reachability with Hybrid Zonotopes}
	\label{sec:hz}

	Deciding \eqref{eq:certcond} calls for the reachable sets of \eqref{eq:tube} themselves, and the representation in which they are computed decides whether the answer is a provable certificate. Hybrid zonotopes are adopted since they represent the closed loop of a multi-parametric program without approximation.

	\subsection{Hybrid zonotopes}
		\label{sec:hzbg} A hybrid zonotope \cite{bird2023hz} is the set
		\begin{equation}
			\label{eq:hz}
			\begin{aligned}
				\mathcal{Z} = \big\{ \mathbf{G}_c \boldsymbol{\xi}_c &+ \mathbf{G}_b \boldsymbol{\xi}_b + \mathbf{c} \ : \ \boldsymbol{\xi}_c \in [-1,1]^{n_{gc}}, \\
				&\boldsymbol{\xi}_b \in \{-1,1\}^{n_{gb}},\ \mathbf{A}_c \boldsymbol{\xi}_c + \mathbf{A}_b \boldsymbol{\xi}_b = \mathbf{b} \big\} ,
			\end{aligned}
		\end{equation}
		a union of $2^{n_{gb}}$ constrained zonotopes \cite{scott2016czono} represented without enumerating them. The class is closed under linear maps, Minkowski sums and generalized intersections, and set queries over it (support, emptiness, containment) are mixed-integer programs whose branch-and-bound cost is governed by the number of binary generators $n_{gb}$. For a zonotope $\Wset$ with generators $\mathbf{G}_\Wset \in \R^{n_x \times n_w}$ and center $\mathbf{c}_\Wset$, the sum $\mathcal{Z} \oplusm \Wset$ has data
		\begin{equation}
			\label{eq:msum}
			\left( \begin{bmatrix} \mathbf{G}_c & \mathbf{G}_\Wset \end{bmatrix},\ \mathbf{G}_b,\ \mathbf{c} + \mathbf{c}_\Wset,\
			\begin{bmatrix} \mathbf{A}_c & 0 \end{bmatrix},\ \mathbf{A}_b,\ \mathbf{b} \right) ,
		\end{equation}
		so $n_{gb}$, $\mathbf{G}_b$ and $\mathbf{A}_b$ are untouched and the operation adds $n_w$ continuous generators and no binary ones.

		The multi-parametric program itself admits a hybrid-zonotope representation. The graph of the explicit solution \eqref{eq:pwa} over the parameter box, that is, the set of parameter-optimizer pairs
		\begin{equation}
			\label{eq:zstar}
			Z^\star := \left\{ (\mathbf{x}, \boldsymbol{\eta}^\star(\mathbf{x})) \ : \ \mathbf{x} \in \Theta \right\} \subset \R^{n_x + n_\eta} ,
		\end{equation}
		is a hybrid zonotope obtained from the KKT conditions of \eqref{eq:condensed} with big-$M$ complementarity, so the closed-loop reachable set of the resulting MPC law can be propagated without approximation \cite{bird2022hz}. The construction is implemented in zonoLAB \cite{koeln2023zonolab}, and related work extends it to nonlinear and neural-network controllers \cite{siefert2023hz} and tightens the relaxations of the representation itself \cite{glunt2025sharp}.

	\subsection{Hybrid-zonotope propagation}
		With $Z^\star$ from \eqref{eq:zstar}, the nominal closed-loop map is
		\begin{equation}
			\label{eq:prop}
			\Phi(\Rset_k) = \mathbf{M}_{\mathrm{next}} \left( Z^\star \cap_{[\mathbf{R}_{\mathrm{sel}}]} \Rset_k \right) ,
		\end{equation}
		a generalized intersection followed by a linear map, in which $\mathbf{R}_{\mathrm{sel}} = [\,\mathbf{I}_{n_x} \ \ \mathbf{0}\,]$ takes the state block of $Z^\star$ so it can be matched against $\Rset_k$, and $\mathbf{M}_{\mathrm{next}} = [\,\mathbf{A} \ \ \mathbf{B}\mathbf{E}_0\mathbf{T}\,]$ carries a surviving pair $(\mathbf{x}, \boldsymbol{\eta}^\star)$ to its successor $\mathbf{A}\mathbf{x} + \mathbf{B}\mathbf{E}_0\mathbf{T}\boldsymbol{\eta}^\star$. Each step adds $n_c$ binary generators, one per inequality constraint of \eqref{eq:condensed}, so $n_{gb} = k\,n_c$ grows linearly in $k$.

		Additive bounded disturbances are accounted for by taking the Minkowski sum of the propagated set with a hybrid zonotope $\Wset$ at each step, $\Rset_{k+1}^\Wset = \Phi(\Rset_k^\Wset) \oplusm \Wset$ of \eqref{eq:tube} \cite{bird2022hz}, which is \eqref{eq:msum} applied once per step.

		Notice that the Minkowski sum with the disturbances results in a hybrid zonotope with the same number of binary variables as the nominal one, so the containment check runs over the same binary dimension and its complexity is unchanged. The disturbance enters the plant of \eqref{eq:plant} and not the prediction model of \eqref{eq:ocp}, so the controller of Section~\ref{sec:empc} stays the nominal one and is verified after the fact instead of being redesigned, which is the position argued in \cite{bird2022hz}. Tube MPC \cite{specht2023tube}, the standard alternative for handling disturbances, instead tightens the constraints of \eqref{eq:ocp} by a margin sized for the worst case and applies that tightened controller at every step, whether or not the disturbance realizes. Applied here it would change the partition, the gain $\mathbf{K}_T$, and the region $\Cset_T$ themselves, which are the objects treated in Sections~\ref{sec:cgr} and~\ref{sec:cert}, instead of leaving them fixed and treating the disturbance separately.

	\subsection{Deciding containment}
		It remains to decide \eqref{eq:certcond}. Writing $\Oinf^\Wset = \{\mathbf{x} : \mathbf{a}_i^\top \mathbf{x} \leq b_i,\ i = 1,\dots,n_o\}$, containment is equivalent to $n_o$ emptiness tests,
		\begin{equation}
			\label{eq:empty}
			\begin{gathered}
				\Rset_k^\Wset \subseteq \Oinf^\Wset \quad \Longleftrightarrow \\[3pt]
				\Rset_k^\Wset \cap \left\{ \mathbf{x} : \mathbf{a}_i^\top \mathbf{x} \geq b_i + \epsilon \right\} = \emptyset , \quad i = 1,\dots,n_o ,
			\end{gathered}
		\end{equation}
		with $\epsilon > 0$ a numerical margin on the strict inequality, each a mixed-integer feasibility problem over \eqref{eq:hz}. Tightening $\mathbf{h}$ in \eqref{eq:Oinf} by $\epsilon$ before the test absorbs that margin, so an \textsc{infeasible} return proves containment in $\Oinf^\Wset$ itself. Nothing was relaxed anywhere in the propagation, so the answer is a proof and not an estimate. A feasible return is also a proof, of the opposite statement, exhibiting a witness state that reaches $\Rset_k^\Wset$ outside $\Oinf^\Wset$.

		Refuting containment typically settles quickly, since a single witness suffices, so the steps before $k^\star$ clear almost immediately and the search concentrates its cost on the step at which containment first holds. Because $n_{gb} = k\,n_c$ grows linearly with the step index, that cost is governed by the product of the horizon's constraint count and the reduction step, so a horizon short enough to keep $n_c$ small keeps the search within a practical budget.

\section{Simulation Results}
	\label{sec:results}

	The two problems of Section~\ref{sec:twoproblems} are answered in turn, the first by measuring what each parametrization costs to build, to store and to evaluate, and the second by computing after how many steps the resulting law reduces to its constant gain over the whole dispersion. The parametrization selected against Problem~\ref{prob:impl} is then carried into the reachability computation of Section~\ref{sec:hz}.

	\subsection{Scenario}
		A chaser starts \SIrange{1400}{2000}{\meter} behind a target in a circular low Earth orbit and closes to a docking box of \SI{\pm 20}{\meter}, and the full configuration is in Table~\ref{tab:scenario}. The sampling period is set from the Hohmann transfer time $T_H = \pi\sqrt{a^3/\mu} = \SI{3022}{\second}$, split into $N_{T_s}$ intervals, $T_s = \lceil T_H/N_{T_s} \rceil$, and the horizon is then $N = N_{T_s}+1$, one sample longer so the controller's prediction spans the whole transfer and converges at the far end. Since $N_{T_s}$ is used only to derive $T_s$, the rest of the paper refers to the horizon by $N$ alone.

		Position and velocity entries in Table~\ref{tab:scenario} are ordered (along-track, cross-track, nadir) in the LVLH frame, per \eqref{eq:lvlh}, and the weights listed there are common to every controller, with $\mathbf{R}$ scaled by a single scalar $\alpha$ per parametrization until the mission $\Delta v$ matches, so that all comparisons are made at equal propellant. All quantities are computed over the full state. The parametrization study additionally runs the nonlinear plant in the loop, so that constraint compliance is measured against true motion and not against the prediction model. Cross-track motion decouples in the CW model and is confined to a \SI{\pm1}{\meter} band, so it plays no part in what follows.

		\begin{table}[!t]
			\caption{\label{tab:scenario} Scenario configuration.}
			\centering
			\footnotesize
			\setlength{\tabcolsep}{3pt}
			\begin{tabular}{@{}lll@{}}
				\hline
				quantity & symbol & value \\\hline
				semi-major axis     & $a$       & \SI{7171}{\kilo\meter} \\
				grav. parameter     & $\mu$     & \SI{3.986e14}{\meter\cubed\per\second\squared} \\
				mean motion         & $n$       & \SI{1.0398e-3}{\radian\per\second} \\
				transfer time       & $T_H$     & \SI{3022}{\second} \\
				sampling period     & $T_s$     & $\lceil T_H/N_{T_s}\rceil$: 605, 303, 202, 61 s \\
				horizon             & $N$       & $N_{T_s}+1$: 6, 11, 16, 51 \\
				mission duration    &           & one transfer, $N$ steps of $T_s$ \\[3pt]
				position box        & $\Xset$   & $x\in[-100,2100]$, $z\in[-100,600]$ m \\
				velocity box        &           & $v_x, v_z \in \pm5$ m/s \\
				cross-track box     &           & $y\in\pm1$ m, $v_y\in\pm0.1$ m/s \\
				terminal box        & $\Xset_N$ & $\pm\SI{20}{\meter}$, $\pm\SI{1}{\meter\per\second}$ \\
				impulse bound       & $\overline{\mathbf{u}}$ & \SI{3}{\meter\per\second} per axis and step \\[3pt]
				dispersion          & $\Xset_0$ & $x\in[0,600]$, $z\in[-90,200]$ m \\
				                    &           & $\pm\SI{2}{\meter\per\second}$ \\
				target state        & $\mathbf{x}_{\mathrm{tgt}}$ & $[2000,0,0,0,0,0]^\top$ m, m/s \\[3pt]
				state weight        & $\mathbf{Q}$   & $\mathrm{diag}(6\mathbf{I}_3,\, 6\!\times\!10^4 \mathbf{I}_3)$ \\
				input weight        & $\mathbf{R}$   & $\alpha \cdot 5.4\!\times\!10^{6}\, \mathbf{I}$ \\
				terminal weight     & $\mathbf{Q}_N$ & DARE solution, $(\mathbf{A},\mathbf{B},\mathbf{Q},\mathbf{R})$ \\
				matched propellant  &           & \SI{2.5}{\meter\per\second} summed impulse \\
				Monte Carlo samples &           & 100 \\\hline
			\end{tabular}
		\end{table}

	% ---------------------------------------------------------------------
	\subsection{Choosing the parametrization (Problem~\ref{prob:impl})}
		Sweeping the parametrizations of Section~\ref{sec:mpc} at matched propellant, so that a difference in partition size reflects the basis and not a more aggressive controller, gives Fig.~\ref{fig:complexity}. The $a_L$ columns sweep the Laguerre pole value at $n_L = 2$, the $n_L$ columns vary the number of basis functions at $a_L = 0.4$, and $MB_2$ and $MB_3$ are move-blocking with blocks of two and three.

		\begin{figure}[!t]
			\centering
			\includegraphics[width=\columnwidth]{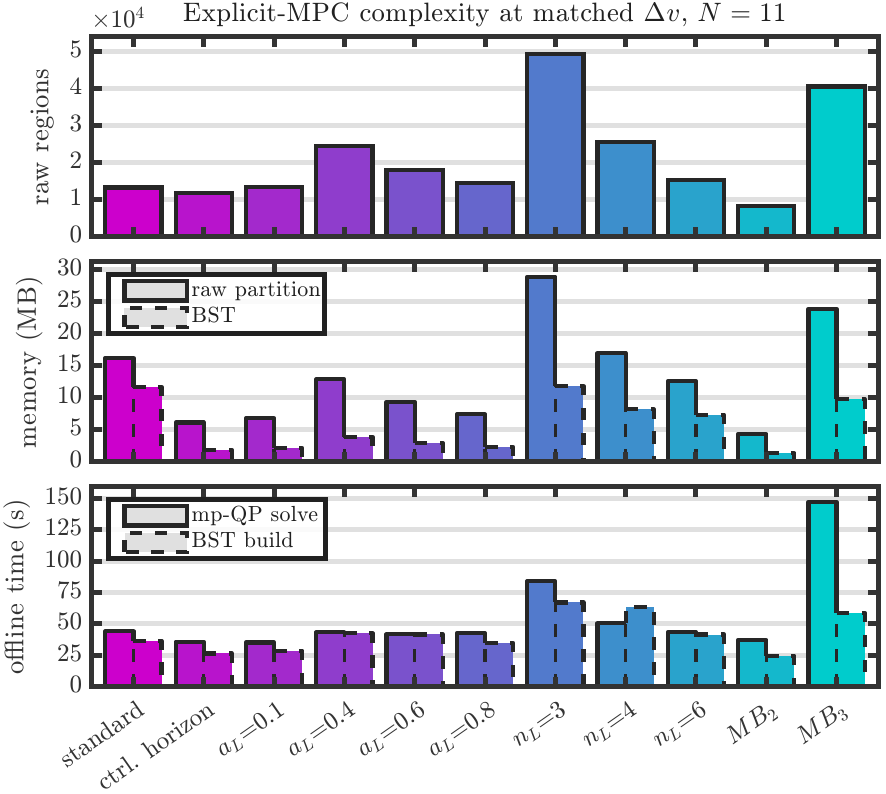}
			\caption{\label{fig:complexity} Offline partition, $\boldsymbol{\eta}$ memory and offline solve time across parametrizations at matched $\Delta v$, $N=11$.}
		\end{figure}

		The columns of Fig.~\ref{fig:complexity} are not all independent designs, since at $n_L = N$ the Laguerre matrix $\mathbf{T}$ becomes square and invertible, so $\mathbf{U} = \mathbf{T}\boldsymbol{\eta}$ spans the whole input space and the parametrized program is a change of variables on the standard one. This was checked at $N=6$, where $n_L = N$ is reached, and region count, memory and closed-loop cost come out equal to the standard column to the digit, at $a_L = 0.4$ and at every other pole tested.

		Across those columns the region count is a poor proxy for the memory a stored $\boldsymbol{\eta}$ occupies, since at $N=11$ the control horizon has \SI{11}{\percent} fewer regions than the standard design but a lookup tree \num{6.2} times smaller, and at $N=6$ the two produce almost identical partitions, \num{2356} against \num{2359} regions, for trees differing by a factor of three. That cost is driven by the dimension of the affine law held at each leaf, not by the number of leaves.

		$MB_2$ gives the smallest partition and the smallest tree of any basis tested (\num{8426} regions, \SI{1.4}{\mega\byte}), but cannot fly the mission, reaching only \SI{1.68}{\meter\per\second} of the \SI{2.5}{\meter\per\second} target at any weighting and completing 10 of 100 runs. Relaxing the blocking to $MB_3$ recovers the transfer on 65 of 100 runs, still short of every other basis tested, and costs \num{40604} regions, three times the standard partition. Blocking the input coarsely enough to shrink the partition removes the authority needed to close the transfer, and relaxing it until the vehicle flies gives no partition size back. Move-blocking is therefore the weakest of the bases tested.

		The Laguerre columns lose in a different way, since a non-zero pole costs regions, moving from the control horizon to $a_L = 0.4$ doubling the partition at fixed $n_\eta$. The Laguerre basis spreads each degree of freedom over the whole horizon, which densifies $\mathbf{G}_T = \mathbf{G}\mathbf{T}$ and admits many more candidate active sets, whereas the control-horizon basis zeroes the tail and thereby renders a large block of input-bound rows redundant. The conditioning penalty that \cite{rossiter2023review} attributes to the Laguerre basis has the same origin. Compliance is unchanged across poles in $[0.1, 0.6]$, so the regions the pole adds return neither a lower closed-loop cost nor better compliance. The control horizon is therefore the preferable basis of those tested, giving the lowest closed-loop cost from $N=11$ onward and the smallest stored law of the bases that complete the transfer.

		Compiling each partition into the binary search tree of Section~\ref{sec:empc}, and comparing the result against the standard controller, gives Table~\ref{tab:complexity}. Every row is computed on the six-state model against a nonlinear plant over \num{100} Monte Carlo initial conditions, the offline column is the multi-parametric solve plus the tree build, and $J$ is the value of the cost function accumulated over the trajectory, weighted by the sampling period. The number of free variables is $n_\eta = 3N$ for the standard basis and $n_\eta = 6$ for the control horizon.

		\begin{table}[!t]
			\caption{\label{tab:complexity} Standard versus control-horizon parametrization at matched $\Delta v$.}
			\centering
			\footnotesize
			\setlength{\tabcolsep}{3.5pt}
			\begin{tabular}{@{}llrrrrr@{}}
				\hline
				$N$ & basis & regions & $\boldsymbol{\eta}$ (MB) & $\mathbf{u}_0$ (MB) & offline (s) & $J$ ($\times 10^{10}$) \\\hline
				6  & standard        & \num{2359}   & \num{1.14} & \num{0.19} & 51.9   & \num{2.911} \\
				& control horizon & \num{2356}   & \num{0.38} & \num{0.19} & 46.2   & \num{2.914} \\[2pt]
				11 & standard        & \num{13332}  & \num{11.8} & \num{1.08} & 81.0   & \num{2.360} \\
				& control horizon & \num{11854}  & \num{1.90} & \num{0.95} & 62.5   & \num{2.197} \\[2pt]
				16 & standard        & \num{25774}  & \num{33.0} & \num{2.07} & 131.5  & \num{2.246} \\
				& control horizon & \num{22972}  & \num{3.69} & \num{1.85} & 88.8   & \num{1.906} \\[2pt]
				51 & standard        & \num{122498} & \num{500}  & \num{9.85} & 2199.7 & \num{2.129} \\
				& control horizon & \num{131131} & \num{21.0} & \num{10.5} & 707.4  & \num{1.660} \\\hline
			\end{tabular}
		\end{table}

		The two designs separate on how much of the predicted sequence a leaf holds. Storing all of it costs $n_\eta$ numbers per leaf, which grows as $N n_u$ for the standard basis while the control horizon stays at $N_u n_u = 6$, so that column grows with $N/N_u$, reaching \num{24}$\times$ at $N=51$. Only $\mathbf{u}_0$ is applied, however, and a leaf holding it alone stores one affine law, $\mathbf{u}_0 = \boldsymbol{\Gamma}_i^{(1)} \mathbf{x} + \boldsymbol{\gamma}_i^{(1)}$ with $\boldsymbol{\Gamma}_i^{(1)} \in \R^{n_u \times n_x}$, which is \num{21} entries whatever the parametrization and brings the two within \SI{15}{\percent} of each other at every horizon. The control horizon keeps its margin elsewhere, building in \SI{707}{\second} against \SI{2200}{\second} at $N=51$, and its $J$ comes out up to \SI{22}{\percent} lower at equal propellant.

		\subsubsection{Does an explicit law fit on board?}
			\label{sec:memory}

			Memory is the objection raised against explicit MPC in space, the partition being expected to outgrow flight storage at a realistic horizon. Every size quoted here is for a law stored in single precision. At $N=16$ the control-horizon controller is \SI{1.85}{\mega\byte} and takes \SI{88.8}{\second} to build\footnote{Offline solve on an AMD Ryzen~7 5700X3D (8 cores/16 threads, \SI{3.0}{\giga\hertz} base / \SI{4.1}{\giga\hertz} boost, 32~GB RAM).}. The flight computer of the Aalto-1 CubeSat carries flash memories of \SI{256}{\mega\byte} and \SI{8}{\mega\byte} alongside \SI{32}{\mega\byte} of SDRAM \cite{araujo2025cots}, so the law fits the smaller of the two flash memories. At $N=51$ the transfer is resolved at $T_s = \SI{61}{\second}$, far finer than the maneuver requires, and the law is \SI{10.5}{\mega\byte}, which the SDRAM carries at run time.

			The tree at $N=51$ is at most \num{21} levels deep, each level costing one inner product in $\R^{6}$, and the leaf adds a single $3 \times 6$ matrix-vector product, for \num{291} floating-point operations in all. Compiled to C and measured on the same workstation, one evaluation takes \SI{86}{\nano\second} on average and under \SI{300}{\nano\second} at worst. The Aalto-1 computer runs an AT91RM9200 rated at \num{200}~MIPS, which places the same evaluation in the microseconds with hardware floating point and in the tens of microseconds under software emulation. Both figures are at least six orders below the sampling period. The on-board cost of an explicit law is therefore set by what each leaf holds and by how deep the tree is, and for this problem both are within a CubeSat flight computer.

			The control-horizon parametrization with $N_u = 2$ therefore answers Problem~\ref{prob:impl}, and is the design carried into Problem~\ref{prob:verif}, whose constant-gain region and reachable tube are computed next.

	% ---------------------------------------------------------------------
	\subsection{Verifying the resulting controller (Problem~\ref{prob:verif})}

		\subsubsection{The constant-gain region and its robust invariant core}
			Every result in this section is for the control-horizon design with $N_u = 2$ carried forward from Problem~\ref{prob:impl}, at $N = 11$ and $T_s = \SI{303}{\second}$. Lemma~\ref{thm:cgr} gives $\Cset_T$ in closed form with 20 facets, and both it and $\Oinf$ are obtained in under a second, without touching the partition built for Problem~\ref{prob:impl}. The Gilbert-Tan recursion terminates at the third iteration on an $\Oinf$ of 18 facets, \SI{67}{\percent} the volume of $\Cset_T$ (volumes taken on the vertex representation of each polytope), so the constant-gain region is not itself invariant here and $\Cset_T$ alone will not carry the argument, which is why Theorem~\ref{thm:rpi} is needed. The same recursion is exact under disturbance, with the tightening in \eqref{eq:Oinf} strict whenever $\Wset \neq \{0\}$.

		\subsubsection{The reachable tube}
			Propagating the full dispersion forward under the closed loop gives the reachable sets shown in Fig.~\ref{fig:tube}, drawn in the position and velocity planes against $\Cset_T$, with the tube colored by step from $\Rset_0$ in magenta through to $\Rset_{10}$ in cyan.

			\begin{figure}[!t]
				\centering
				\includegraphics[width=\columnwidth]{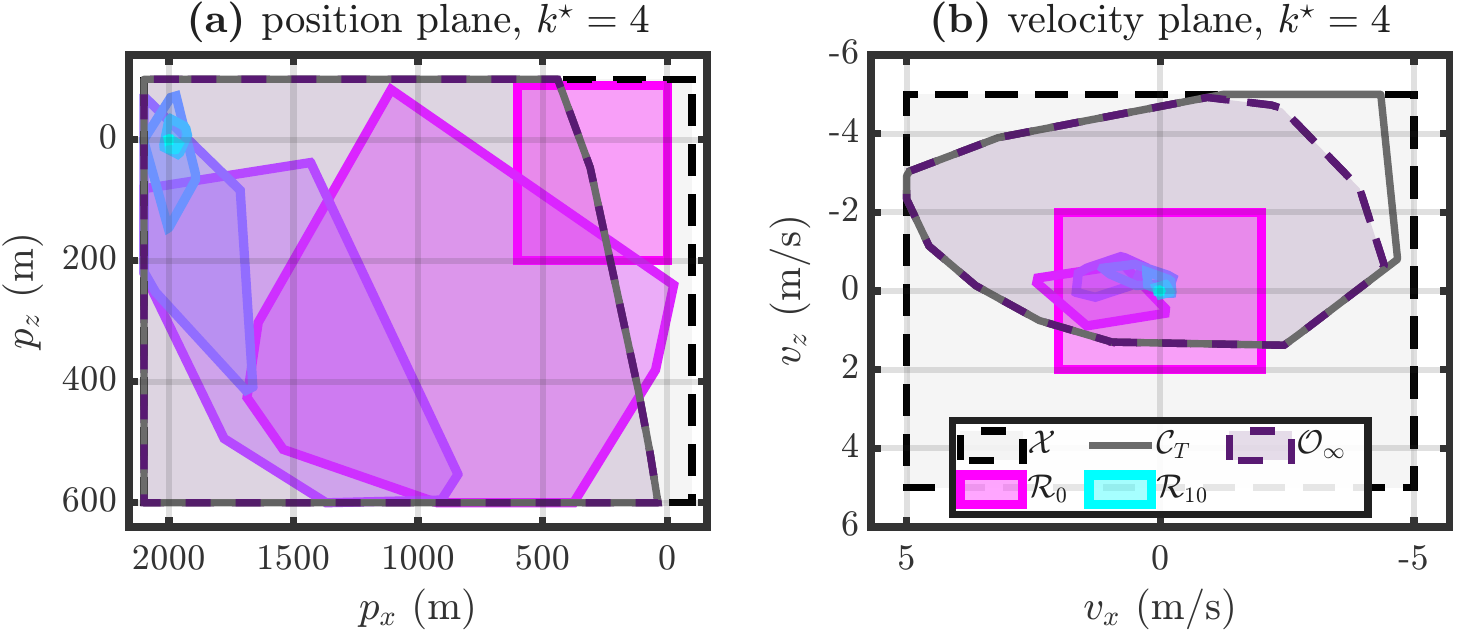}
				\caption{\label{fig:tube} Reachable sets $\Rset_k$ of the closed loop against the constant-gain region $\Cset_T$. (a) position plane. (b) velocity plane.}
			\end{figure}

			The projections drawn above are shadows, and containment of a shadow is necessary but not sufficient, so every step index is taken from the full-dimensional computation. The two projections cross into $\Cset_T$ at different steps. The full-dimensional test establishes the hypothesis of Theorem~\ref{thm:cert} at $k^\star = 4$ over the whole dispersion, the program remaining feasible on the tube at every earlier step as Assumption~\ref{as:feas} requires.

			Beyond $k^\star$, Theorem~\ref{thm:cert} gives the whole statement at once, with the applied input the fixed gain $-\mathbf{K}_T \mathbf{x}$, the state confined to $\Oinf \subseteq \Cset_T \subseteq \Xset$ so that constraints hold for all time, and stability decided by the spectral radius $\rho(\Acl) < 1$, which holds here with $\rho(\Acl) = \num{0.54}$, so the state converges to the minimal robust invariant set of $(\Acl, \Wset)$ and to the origin when $\Wset = \{\mathbf{0}\}$. The on-board partition is therefore exercised over only four steps, and past that the applied law is a single matrix multiplication.

		\subsubsection{Robustness}
			Under Assumption~\ref{as:W} the same conclusion holds for every disturbance sequence in $\Wset$, with $\Oinf$ replaced by the tightened set $\Oinf^\Wset$ of Theorem~\ref{thm:rpi} and the tube by \eqref{eq:tube}. By \eqref{eq:msum} the disturbance contributes only continuous generators, so the disturbed problem is solved over the same binary dimension as the nominal one, the entire cost of robustness being a right-hand-side tightening in \eqref{eq:Oinf} and $n_w$ extra columns in $\mathbf{G}_c$. If $\Wset$ is large enough that $\Oinf^\Wset$ is empty, the recursion returns the empty set and Theorem~\ref{thm:cert} does not apply.

			Since $\mathbf{w}_k$ in \eqref{eq:plant} is the error committed between one control instant and the next, $\Wset$ has to bound the residual accumulated over a single sampling period and not over the mission. The measurement adopted here mirrors that definition, Theorem~\ref{thm:cert} taking $\Wset$ as given so that any construction satisfying Assumption~\ref{as:W} may be used. From each state of a grid over $\Xset$, the nonlinear propagator and the prediction model of \eqref{eq:cwmat} are both advanced across one $T_s$, and the difference between the two end states is the disturbance that would have acted on that step.

			Sweeping that measurement over the separation $\ell$ gives Fig.~\ref{fig:cwbudget} and shows how the residual grows with distance. Every point places the chaser where a single Hohmann transfer to the target would begin. Errors are scored as $\lVert [\, \Delta \mathbf{p}^\top \ \ \Delta \mathbf{v}^\top / n \,] \rVert$, the weighting used for model comparison in \cite{sullivan2017survey}, so position and velocity carry the same units, and the truth model is turned on one term at a time to separate their contributions. Differential $J_2$ dominates and grows linearly in $\ell$, while the linearization error grows quadratically, so the two meet at \SI{20}{\kilo\meter} and the linearization error dominates past it, well outside the close-range regime of Remark~\ref{rem:scope}. Cutting the transfer into more intervals shrinks all of it, from \SI{3.83}{\meter} at $T_s = T_H/5$ to \SI{0.31}{\meter} at $T_H/50$, and the control-horizon basis selected against Problem~\ref{prob:impl} keeps the resulting horizon affordable at the \SI{10.5}{\mega\byte} of Table~\ref{tab:complexity}.

			\begin{figure}[!t]
				\centering
				\includegraphics[width=\columnwidth]{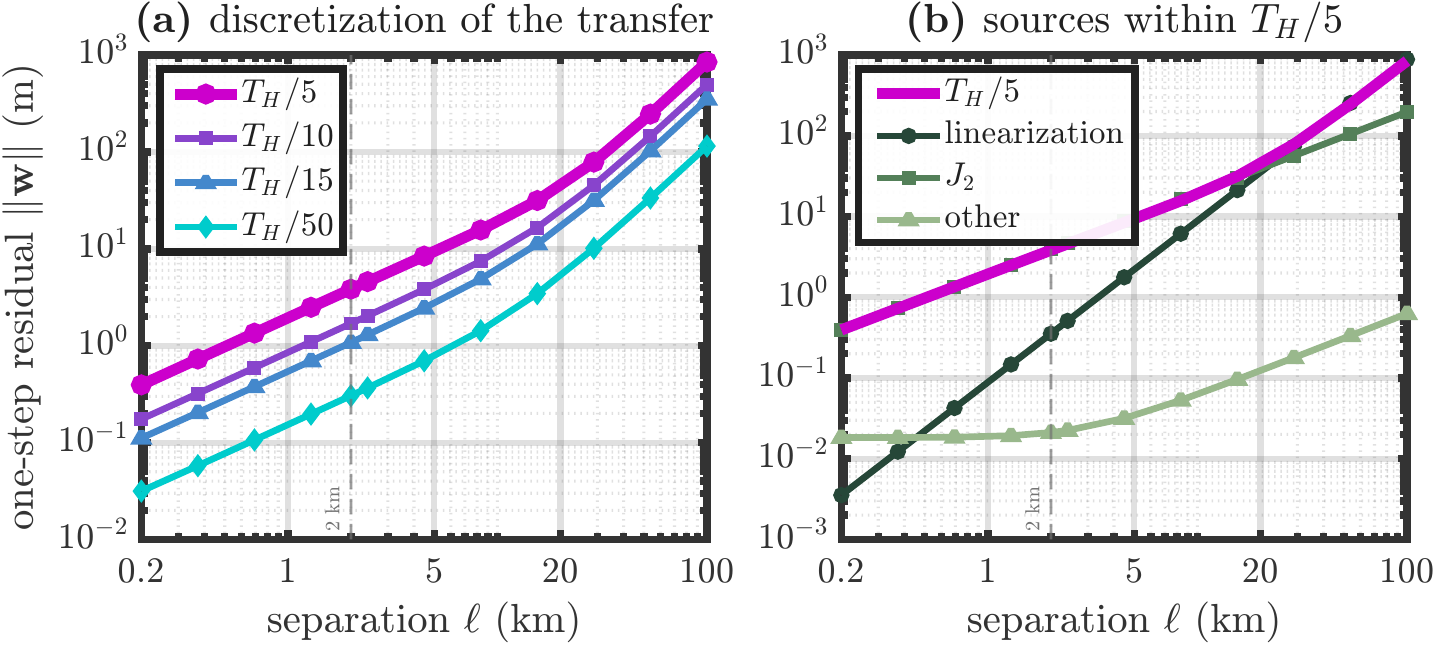}
				\caption{\label{fig:cwbudget} One-step residual against separation. (a) transfer split into $N_{T_s}$ intervals. (b) sources within $T_H/5$.}
			\end{figure}

			Reading Fig.~\ref{fig:cwbudget} at the \SI{2}{\kilo\meter} start of the transfer, marked on both panels, differential $J_2$ and the linearization error contribute \SI{3.91}{\meter} and \SI{0.35}{\meter}, while the higher zonals, differential drag, solar radiation pressure and third-body attraction together add \SI{0.02}{\meter}. Taking the componentwise maximum over the grid with a \num{1.25} margin gives
			\begin{equation}
				\label{eq:Wbox}
				\Wset = \{\mathbf{w} : \lvert \mathbf{w} \rvert \leq \bar{\mathbf{w}}\} , \quad
				\bar{\mathbf{w}} = \begin{bmatrix}
					\SI{2.09}{\meter} \\ \SI{0.0135}{\meter} \\ \SI{3.14}{\meter} \\
					\SI{0.0127}{\meter\per\second} \\ \SI{3.3e-5}{\meter\per\second} \\ \SI{0.0157}{\meter\per\second}
				\end{bmatrix} .
			\end{equation}
			The cross-track entries sit two orders of magnitude below their in-plane counterparts, so the in-plane/out-of-plane decoupling of \eqref{eq:cwmat} carries over to the disturbance. Repeating the same measurement at geostationary altitude, where the dominant perturbations differ \cite{spiridonova2016geo}, gives a residual \num{227} times smaller, so low Earth orbit is the demanding case and $\bar{\mathbf{w}}$ is conservative for the higher one.

			The results reported here take $5\bar{\mathbf{w}}$ instead of $\bar{\mathbf{w}}$, a per-step position error of \SIrange{10}{16}{\meter} against a \SI{20}{\meter} arrival box. Containment still holds, at $k^\star = 6$ against the nominal \num{4}, with $\Oinf^\Wset$ retaining \SI{33.6}{\percent} of $\Cset_T$, so five times the physical disturbance costs two steps of $k^\star$, and Fig.~\ref{fig:robustslice} draws the disturbed tube against both sets in the views of Fig.~\ref{fig:tube}.

			The disturbed sets are larger than their nominal counterparts because $\Wset$ enters \eqref{eq:tube} at every step, and where the nominal tube already lies against a state constraint the disturbed one crosses it by one $\bar{\mathbf{w}}$, which at $5\bar{\mathbf{w}}$ is \SI{15.7}{\meter} in $z$ and \SI{10.5}{\meter} in $x$. The crossing is a single step's displacement and not an accumulation, since the displaced state is the one the controller acts on at the next instant. It also stops at $k^\star$, because from there on the state lies in $\Oinf^\Wset$, whose rows already carry the tightening of \eqref{eq:Oinf}. Both figures are drawn against the box of Table~\ref{tab:scenario}, so what the disturbed tube crosses is the margin Assumption~\ref{as:margin} reserves and not the corridor itself. The box is left without removing the disturbance effect here so that Figs.~\ref{fig:tube} and~\ref{fig:robustslice} are read against the same set.

			\begin{figure}[!t]
				\centering
				\includegraphics[width=\columnwidth]{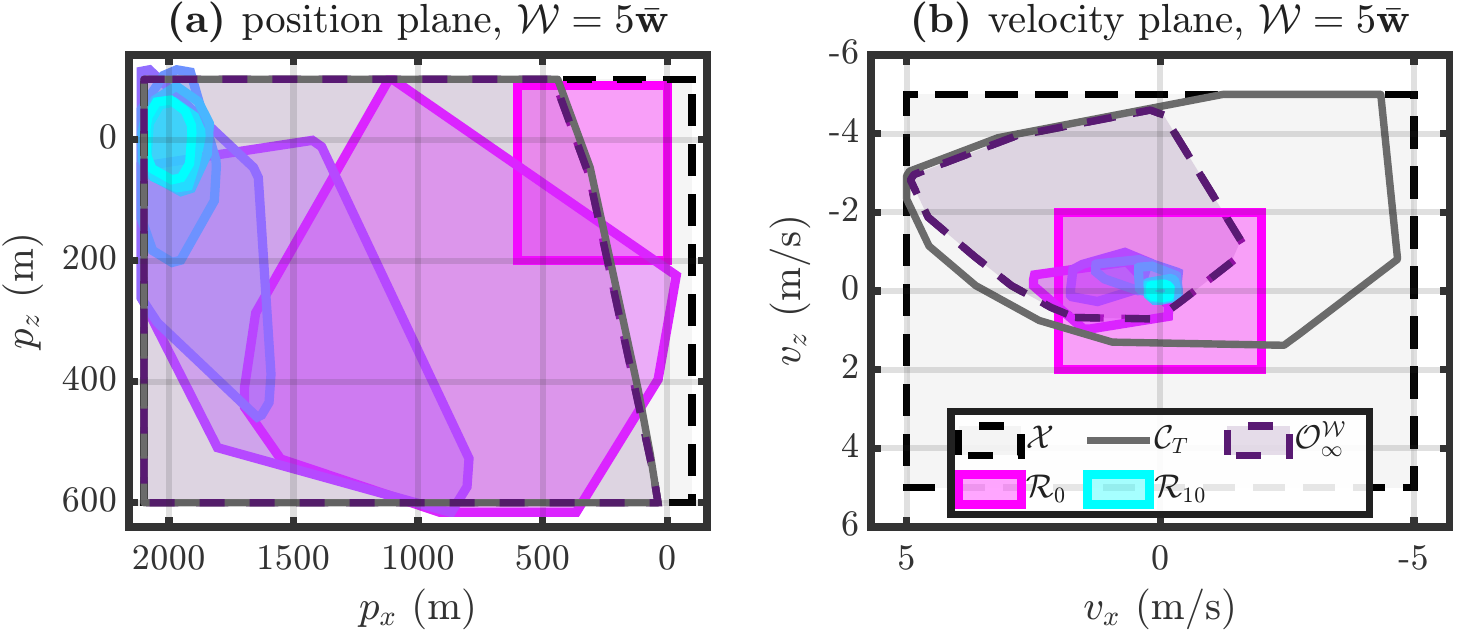}
				\caption{\label{fig:robustslice} Disturbed reachable sets $\Rset_k^\Wset$ at $5\bar{\mathbf{w}}$ against $\Cset_T$ and the tightened $\Oinf^\Wset$. (a) position plane. (b) velocity plane.}
			\end{figure}

		\subsubsection{Effect of the parametrization on the reduction step}

			Because $\Cset_T$ is available in closed form for every basis, Theorem~\ref{thm:cert} can be reapplied to each at no cost. Doing so at matched $\Delta v$ gives the volume of $\Oinf$ in Fig.~\ref{fig:sweep}, with the reduction step $k^\star$ annotated for each parametrization. Lemma~\ref{thm:cgr} and Theorem~\ref{thm:cert} serve every basis without modification, move-blocking included, although it is already ruled out on mission completion.

			The Laguerre basis at $n_L=2$, $a_L=0.3$ reduces earliest of all, at $k^\star = 3$ against the standard basis's 4, for a cost penalty of \SI{3.1}{\percent}. Every basis except move-blocking reduces within one step of the standard design, the control horizon carried forward from Problem~\ref{prob:impl} among them at a penalty of \SI{1.7}{\percent}. Move-blocking is the outlier, entering four steps later than the rest at a \SI{25}{\percent} cost penalty. It also holds the smallest region of the sweep while entering last, against the standard basis which holds the largest and enters at $k^\star = 4$, so $k^\star$ is set not by the extent of $\Oinf$ but by how quickly the induced closed loop $\Acl$ drives the reachable tube toward it.

			\begin{figure}[!t]
				\centering
				\includegraphics[width=0.6\columnwidth]{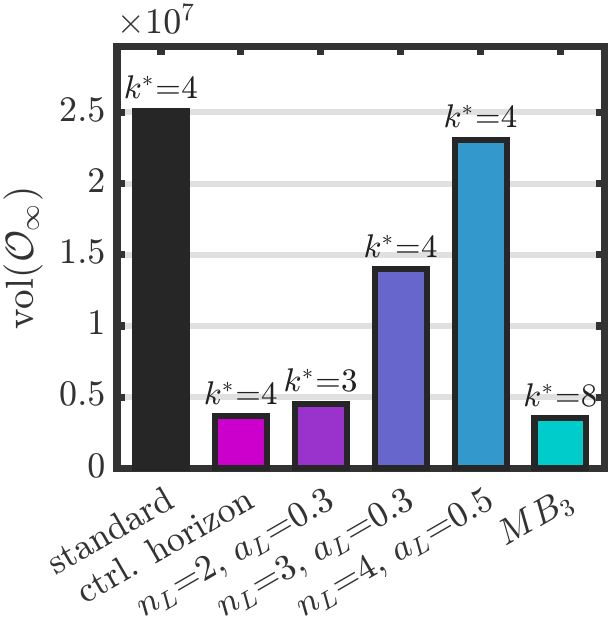}
				\caption{\label{fig:sweep} Volume of the robust invariant set $\Oinf$ per parametrization.}
			\end{figure}

% =====================================================================
\section{Conclusion}

	Explicit MPC was treated here as a design problem and as a verification problem, both of which are governed by the input parametrization. Among the parametrizations available for reducing explicit complexity, a short control horizon lowered the closed-loop cost by up to \SI{22}{\percent} at matched propellant and built three times faster. Its advantage in stored size depends on what a leaf keeps, reaching \num{24}$\times$ when the whole predicted sequence is held and closing to within \SI{15}{\percent} when storing only the applied action, since the memory a partition then occupies is fixed by the dimension of that one law and by how many distinct laws survive merging. Move-blocking was the weakest of the bases tested, since blocking the input coarsely enough to shrink the partition removed the authority needed to close the transfer, and relaxing it until the vehicle flew gave no partition size back. At $N=51$ the law a flight computer must carry is \SI{10.5}{\mega\byte} and evaluates in under \SI{300}{\nano\second}.

	The controller obtained is then certifiable offline, being proven to reduce to a fixed gain. For any linear input parametrization, the set on which the applied input equals a single constant gain is available in closed form from the condensed program, without a multi-parametric solve or an enumeration of critical regions. Its maximal robustly invariant subset then follows from the Gilbert-Tan recursion with a support-function tightening. Exactly propagating the closed-loop reachable set on the hybrid-zonotope representation of the multi-parametric program and testing containment in that subset returned $k^\star = 4$ for the low-Earth-orbit scenario treated here, beyond which the mission is a linear system whose constraint satisfaction and stability follow from a set containment and from $\rho(\Acl) = \num{0.54}$. The statement holds for every initial condition in the dispersion and every disturbance sequence in $\Wset$, and the disturbance enters through two Minkowski sums that leave the binary dimension of the test unchanged, so five times the measured disturbance cost two steps of $k^\star$ at no increase in the cost of deciding it.

	The result carries two limits. The invariant set is maximal within the constant-gain region and not maximal for the closed loop, so $k^\star$ is an upper bound and a larger invariant set would only lower it. The prediction model is the CW one, which confines the scope to a near-circular target at close range, everything outside that scope having to enter as a larger $\Wset$ rather than as a different plant. Neither limit changes the form of the result, so Lemma~\ref{thm:cgr} and Theorem~\ref{thm:cert} apply unchanged to a more demanding scenario. Together the two results supply a rendezvous controller whose storage, evaluation time and closed-loop behavior are all fixed before launch.

\bibliographystyle{IEEEtran}
\bibliography{sample}

\end{document}